\theoremstyle{definition}
\newtheorem{theorem}{Theorem}[section]
\newtheorem{lemma}[theorem]{Lemma}
\DeclareMathOperator{\tr}{Tr}
\newtcolorbox{highlighted}{colback=yellow,coltext=black,breakable}
\def\section{\@startsection{section}{1}{\z@}{-3.25ex plus -1ex minus
		-.2ex}{1.5ex plus .2ex}{\normalfont\bfseries}}
\def\subsection{\@startsection{subsection}{1}{\z@}{-3.25ex plus -1ex
		minus -.2ex}{1.5ex plus .2ex}{\normalfont\itshape}}
\date{}
\title{Coloured combinatorial maps and quartic bi-tracial 2-matrix ensembles from noncommutative geometry}
\author{ Masoud Khalkhali and  Nathan Pagliaroli\\
	Department of Mathematics, University of Western Ontario\\
	London, Ontario, Canada\footnote{\emph{Email addresses}:  masoud@uwo.ca, npagliar@uwo.ca}}
\begin{document}
	\maketitle
	\begin{abstract}
		We compute the first twenty moments of three  convergent quartic bi-tracial 2-matrix ensembles in the large $N$ limit.  These ensembles are toy models for Euclidean quantum gravity originally proposed by John Barrett and collaborators. A perturbative solution is found for the first twenty moments using the Schwinger-Dyson equations and properties of certain bi-colored unstable maps associated to the model. We then apply a result of Guionnet et al. to show that the perturbative and convergent solution coincide for a small neighbourhood of the coupling constants. For each model  we compute an explicit expression for the free energy, critical points, and critical exponents in the large $N$ limit. In particular, the string susceptibility is found to be $\gamma =1/2$, hinting that the associated universality class of the model is the continuous random tree.
	\end{abstract}
	\tableofcontents
	\section{Introduction}
	
	When attempting to define theories of Euclidean quantum gravity, one is usually interested in making sense of path integrals over some class of Riemannian metrics. In the context of noncommutative geometry, spectral triples are analogous to manifolds, and in some sense spectral triples generalize manifolds. In particular, for Riemannian spin$^{c}$ manifolds all the metric information can be recovered from the associated spectral triple via Connes' distance formula \cite{connes2013spectral}. With this in mind, Barrett \cite{barrett2015matrix} proposed defining path integrals over the moduli spaces of Dirac operators instead of those of metrics. To make these integrals well-defined, Barrett considered finite approximations of spectral triples called fuzzy  geometries. The resulting integrals are matrix integrals. The hope after fuzzifying these spectral triples is that in some limit one might be able to recover path integrals over metric spaces of Reimannian spectral triples. Thus, in some sense,  this would recover a theory of Euclidean quantum gravity. The limits of finite approximations of spectral triples is an active area of study \cite{rieffel2004gromov,rieffel2023dirac,connes2021spectral,latremoliere2023continuity}.
	
	While the work surrounding these models is not quite at this point of development, we have seen hints of the continuous theory in various limits. As originally pointed out in \cite{barrett2016monte}, in the large $N$ limit the spectral density function of the Dirac operators of certain Dirac ensembles bears resemblances to the spectra of Dirac operators on spin manifolds. This was explored quantitatively in \cite{barrett2019spectral}. More recently, in the double scaling limit, various Dirac ensembles have been shown to have the same critical exponents and satisfy the same differential equations as various minimal models from conformal field theory \cite{hessam2023double}. 
	One approach to this problem might be  to consider random metric spaces of maps defined by the perturbative expansion of these models. Based on the critical exponents found in \cite{hessam2023double}, these random metric spaces are expected to converge to the Brownian map \cite{le2013uniqueness}, but this will be explored in a future work. This is in contrast to the models studied in this paper, for which we find  strong hints that the associated random metric spaces converge to the continuous random tree.

	The models of interest in this article are three 2-matrix bi-tracial ensembles proposed by Barrett and Glaser \cite{barrett2016monte}. These models have been studied numerically in \cite{barrett2016monte,glaser2017scaling, barrett2019spectral,d2022numerical,glaser2023computational} via Monte Carlo simulations, which provided evidence of a spectral phase transition. Additionally, analytical bounds for the moments of these models were computed in \cite{hessam2022bootstrapping} using the bootstrap technique. We will show that all three ensembles have the same moments in the large $N$ limit and that one only needs to consider the following effective ensemble:
	$$\frac{1}{Z} e^{-S_{\text{eff}}(A,B)}dAdB,$$
	where this is a probability distribution on the space of pairs of $N$ by $N$ Hermitian matrices. The measure $dA dB$ is the product Lebesgue measure, and the potential is
	\begin{equation*}
		\begin{aligned}
			S_{\text{eff}}(A,B) = & 4 t_2 \left[N\operatorname{Tr}A^2+\tr B^2\right]  +4 t_{4} N\left[\operatorname{Tr} A^4 +\tr B^4+4 \operatorname{Tr} A^2 B^2-2 \operatorname{Tr} A B A B\right] \\
			& +12t_{4} \left[\left(\operatorname{Tr} A^2\right)^2+\left(\operatorname{Tr} B^2\right)^2\right] +8 t_{4} [\operatorname{Tr} A^2 \operatorname{Tr} B^2 ].
		\end{aligned}
	\end{equation*}
	The probability measure depends on two real coupling constants $t_{2}$ and $t_{4}$, where $t_{4}>0.$ 
	
	These models are interesting purely from the perspective of random matrix theory. Despite the success of single matrix ensembles \cite{mehta2004random,ercolani2003asymptotics}, in general very little is known about multi-matrix ensembles. For perturbative models,  much is known for general potentials in which the only unitarily invariant term is a $\tr AB$ interaction \cite{eynard1998matrices,chekhov2006free}. Besides this, only special cases of more general multi-matrix interactions are known \cite{kazakov2000solvable,eynard2011formal}. In convergent models even less is analytically tractable \cite{chadha1981method,guionnet2004first}. However, progress has been made in determining when the asymptotics of convergent models coincide with their perturbative counterparts \cite{guionnet2005combinatorial,bekerman2015transport}. Such results will be utilized in this work, allowing us to work with a perturbative expansion but make conclusions about convergent matrix integrals. An additional level of complexity in the matrix integrals proposed by Barrett comes from the fact that they are bi-tracial. As far back as the 90's physicists were already interested in studying integrals known as $\textit{multi-trace}$ in which the potential function $S(M)$ includes some product of traces.  Multi-tracial matrix integrals have appeared in many other areas of study \cite{Macroscopicandmicroscopic,Planar,Witten:2001ua,fuzzyfield}, and have seen the development of tools for both the perturbative and convergent cases \cite{borot2014formal,borot2017blobbed,de1995statistical}.

	Despite Barrett's models having highly nontrivial 2-matrix interactions in their potentials, in this article we derive  explicit formulae for the first twenty moments in terms of the coupling constants $t_{2}$ and $t_{4}$.  The idea of the derivation is to first consider the perturbative expansion of the models and study them as the generating functions of certain types of maps. Certain properties of these maps and the associated Schwinger-Dyson equations will allow us to deduce the moments. Applying a result of \cite{guionnet2005combinatorial} shows that these formulae are equivalent to the moments of the convergent ensembles in some neighbourhood of the coupling constants near zero. From these formulae for the first few moments we deduce the free energy, critical points, and critical exponents of the model. Note that partition functions of these models are the same whether they is written in terms of $D$ or $A$ and $B$, so even though we are working with moments in terms of $A$ and $B$, instead of $D$, we will be able to compute the above mentioned quantities of interest. Work towards explicit formulae for tracial powers of the Dirac operators of fuzzy geometries in terms of the tracial powers of constituent matrices can be found in \cite{perez2022computing}.  It is out hope that these results will lead to more analytic results as well as more direct approaches to studying these models in future works.

	In the following section we introduce the necessary background from Noncommutative Geometry and state the main results. In Section 2, we outline the derivation of the  Schwinger-Dyson equations for the model and its properties. Section 3 gives a short review of the relevant kind of maps before proving several useful properties that allow for the derivation of the moments. Section 4 shows the computation of the free energy as well as the critical exponents. Section 5 outlines the future work and implications of our results.
	
	\section{Background and Summary}
	\subsection{Random fuzzy geometries}
	In \cite{connes2013spectral}, Connes introduced the notation of a spectral triple $(\mathcal{A},\mathcal{H}, D)$ in which 
	\begin{itemize}
		\item $\mathcal{A}$ is a unital, involutive, complex, and associative algebra.
		\item The complex Hilbert space $\mathcal{H}$ is acted on by elements of  $\mathcal{A}$.
		\item   The Dirac operator $D$ is a self-adjoint operator acting on $\mathcal{H}$, that is in general unbounded. 
	\end{itemize} 
	These objects are additionally required to satisfy some regularity conditions. However, we are interested in spectral triples that automatically satisfy these conditions, so such details will be omitted. In particular, we are interested in real spectral triples, which have even more additional structures. The motivation to study real spectral triples is that they serve as noncommutative analogs of   $\text{spin}^c$  Riemannian manifolds. This idea is based on the fact that any closed $\text{spin}^c$ Riemannian manifold $M$ gives rise to a real spectral triple in which the algebra $\mathcal{A}= C^{\infty} (M)$ is the algebra of smooth complex valued functions on  $M$ and the Hilbert space is the space of square integrable sections of the spinor bundle such that the elements of $\mathcal{A}$ act as multiplication operators. The Dirac operator $D$ is the usual Dirac operator of $M$, and acts on the spinors. The additional structure mentioned before consists of standard charge conjugation and chirality operators, $J$ and $\gamma$. Conversely, the reconstruction theorem of Connnes tells us that under some natural conditions a real spectral triple with a commutative algebra can be realized as the real spectral triple of a $\text{spin}^c$ Riemannian manifold \cite{connes2013spectral}. 
	
	Fuzzy spaces have been studied as a method of regularization of commutative spaces since the fuzzy sphere in \cite{madore1992fuzzy}. In particular, they can be characterized within the formalism of spectral triples and are called fuzzy geometries or fuzzy spectral triples \cite{barrett2015matrix,barrett2019finite}. From a physics perspective these can  be thought of as $\text{spin}^c$  Riemannian manifolds with a finite resolution or Planck length. 
	
	A $(p,q)$ fuzzy geometry is a  real spectral triple of the form $(M_{N} (\mathbb{C}),V \otimes M_{N}(\mathbb{C}), D; J, \Gamma)$
	in which 
	\begin{itemize}
		\item The algebra of functions is replaced by the algebra of $N$ by $N$ complex matrices.
		\item The Hilbert space is some Hermitian irreducible Clifford module of signature $(p,q) $ with the charge conjugation operator $J$ and grading $\Gamma$  when the KO dimension $p+q$ is even.
		\item $D$ is a self-adjoint matrix that satisfies the so-called zero order and first order conditions \cite{barrett2015matrix}.
	\end{itemize}

	A result from \cite{barrett2015matrix} is that all Dirac operators satisfying the above-mentioned conditions can be expressed as
	\begin{equation}
		D=\sum_{I} \gamma^{I}\otimes \{K_{I},\cdot\}_{e_{I}}
		\label{general fuzzy Dirac},
	\end{equation}
	where the sum is over index sets of the form $\{i_{1} \leq ...\leq i_{k}\}$ with each index between one and $p+q$. The operator $\gamma^{I}$ denotes some product of gamma matrices. 
	If $\gamma^I$ is Hermitian, $e_I = 1$ and $\{K_{I},\cdot\}_{e_{I}} = \{H_{I},\cdot \}$, where $H_{I}$ is some Hermitian matrix.
	If $\gamma^I$ is skew-Hermitian, $e_I = -1$ and $\{K_{I},\cdot\}_{e_{I}} = [L_{I},\cdot ]$, where $L_{I}\,$is some skew-Hermitian matrix. 
	One can deduce from this result that the space of  possible Dirac operators $\mathcal{D}$ is isomorphic as a real vector space to some Cartesian product of copies of the spaces of $N$ by $N$ Hermitian matrices $\mathcal{H}_{N}$, and $N$ by $N$ traceless Hermitian matrices $\mathcal{H}_{N}^{0}$. In the large $N$ limit, these traceless Hermitian matrix ensembles have the same moments as their Hermitian counterparts \cite{d2022numerical}. Hence, since we are currently only interested in the large $N$ distribution of these models, we will strictly consider Hermitian matrices.
	
	With quantum gravity as a motivation, it makes sense to then consider a probability distribution on $\mathcal{D}$ called a Dirac Ensemble (or sometimes a random fuzzy geometry). The usual probability distributions of choice are of the form 
	$$\frac{1}{Z}e^{- \tr V(D)}dD,$$
	where $V$ is some polynomial with coupling constants as coefficients such that the probability distribution is well-defined.
	In \cite{khalkhali2022spectral}, Gaussian Dirac ensembles were studied extensively and found to have universal properties in the large $N$ limit. However, note that the main choice of potential in most works has been a quartic action 
	$$V(D) = g\tr D^2 + \tr D^4,$$
	since it  has been seen to already exhibit many interesting properties. In particular, quartic Dirac ensembles of this form  exhibit manifold-like behaviour near spectral phase transitions \cite{barrett2016monte,glaser2017scaling,barrett2019spectral,glaser2023computational}. If an additional coupling constant is considered in front of the quartic term, when tuned to criticality, such models have been found to have connections to the $(3,2)$ and $(5,2)$ minimal models from conformal field theory \cite{hessam2023double}. Until this work, explicit analytical progress on such models has mostly been on Dirac ensembles with only one Hermitian matrix \cite{khalkhali2020phase,khalkhali2022spectral,hessam2023double,verhoeven2023geometry}, with the notable exception of \cite{perez2022multimatrix} and bounds on moments obtained in \cite{hessam2022bootstrapping}. For more details we refer the reader to the recent review \cite{hessam2022noncommutative}.

	\subsection{Outline of main results}
	In this paper the main objects of study include the following Dirac ensemble for signatures $(2,0),(1,1)$, and $(0,2)$: 
	\begin{equation}\label{eq:measure}
		d\mu^{(p,q)} (D) =\frac{1}{Z}e^{-t_{2} \tr D^2 - t_{4}\tr D^{4}}dD.
	\end{equation}
	Its partition function is
	\begin{equation}\label{eq:partition function 1}
		Z= \int_{\mathcal{D}^{(p,q)}}e^{-t_{2} \tr D^2 - t_{4}\tr D^{4}}dD
	\end{equation}
	where $t_{2}$ and $t_{4}$ are some real coupling constants. Originally in \cite{barrett2015matrix,barrett2016monte}, a key aspect of these models is that they are convergent matrix integrals and no perturbative expansion or renormalization techniques are required to make them well-defined mathematical objects. However, in this work we shall consider both the convergent models and their perturbative counterparts. A formal matrix integral is a well-defined formal series defined by series expanding all non-Gaussian terms in the potential and then interchanging summation and integration. These are vastly different mathematical objects that historically have caused confusion, but have a deep relationship. For more details see \cite{albeverio20011,ercolani2003asymptotics,guionnet2005combinatorial,eynard2011formal}. In particular, we will show that in the large $N$ limit the loop equations for these models are the same for both the formal and convergent models, and have a unique solution. We will denote expectation with respect to a formal matrix integral with bra and ket, and with respect to a convergent matrix ensemble with $\mathbb{E}[\cdot]$.
	
	The Dirac operators on these fuzzy geometries can be expressed as
	$$D_{(2,0)} =\sigma_{3} \otimes \{A,\cdot\}+\sigma_{1} \otimes \{B,\cdot\}$$
	$$D_{(1,1)} =\sigma_{3} \otimes [A,\cdot]+\sigma_{1} \otimes \{B,\cdot\}$$
	$$D_{(0,2)} =\sigma_{3} \otimes [A,\cdot]+ \sigma_{1}\otimes [B,\cdot]$$
	where $A$ and $B$ are  $N\times N$ Hermitian matrices, and $\sigma_{1}$ and $\sigma_{3}$ are the Pauli spin matrices 
	$$\sigma_{1}= \begin{bmatrix}
		0 & 1 \\
		1 & 0
	\end{bmatrix} \qquad \qquad \sigma_{3}= \begin{bmatrix}
		1 & 0 \\
		0 & -1
	\end{bmatrix}.$$

	The commutators and anti-commutators are represented as matrices 
	$$\{A,\cdot\} = A \otimes I_{N} + I_{N}\otimes A^{T} $$
	$$[A,\cdot] = A \otimes I_{N} - I_{N}\otimes A^{T} $$
	via the isomorphism between $M_{N}(\mathbb{C})$ and $\mathbb{C}^{N}\otimes \mathbb{C}^{N}$.
	
	Expressing the action in terms of $A$ and $B$ gives us
	\begin{equation}\label{eq:action}
		\begin{aligned}
			S(D)= & 4 {t_2} \left[N\operatorname{Tr}A^2+N\tr B^2+\epsilon_1\left(\operatorname{Tr} A\right)^2+\epsilon_2\left(\operatorname{Tr} B\right)^2\right] \\
			& +4 t_{4} N\left[\operatorname{Tr} A^4 +\tr B^4+4 \operatorname{Tr} A^2 B^2-2 \operatorname{Tr} A B A B\right] \\
			& +4 t_{4} \left[4 \epsilon_1 \operatorname{Tr} A^3 \operatorname{Tr} A+4 \epsilon_2 \operatorname{Tr} B^3 \operatorname{Tr} B+3\left(\operatorname{Tr} A^2\right)^2+3\left(\operatorname{Tr} B^2\right)^2\right] \\
			& +16 t_{4} [\epsilon_1 \operatorname{Tr} A B^2 \operatorname{Tr} A+ \epsilon_2 \operatorname{Tr} B A^2 \operatorname{Tr} B ]\\
			& +8 t_{4} [\operatorname{Tr} A^2 \operatorname{Tr} B^2+2 \epsilon_1 \epsilon_2\left(\operatorname{Tr} A B\right)^2 ].
		\end{aligned}
	\end{equation}
	where the epsilons are signs that change depending on the signature of the fuzzy geometry according to Table \ref{table:signs}.
		\begin{table}[H]
			\centering
			\begin{tabular}{l c l c l c l c l}
				\hline
				KO & $\epsilon_{1}$& $\epsilon_{2}$\\
				\hline
				(2,0) & \,1 &  \,1 \\
				(1,1) &  \,1 & -1   \\
				(0,2) &  -1 &  -1 
			\end{tabular}
			\caption{Different signs in the action correspond to different KO dimension \cite{barrett2015matrix}.}
			\label{table:signs}
		\end{table}
	In this paper, all results will be to leading order in the large $N$ limit. As pointed out in \cite{khalkhali2022spectral,hessam2022bootstrapping}, many of the terms in \eqref{eq:action} do not contribute to the leading order loop equations. As such we can consider a simplified model whose action we will refer to as the effective action, but who has the exact same large $N$ behaviour as the above models:
	\begin{equation}\label{eq: effective action}
		\begin{aligned}
			S_{\text{eff}}(A,B) = & 4 t_2 N\left[\operatorname{Tr}A^2+\tr B^2\right] \\
			& +4 t_{4} N\left[\operatorname{Tr} A^4 +\tr B^4+4 \operatorname{Tr} A^2 B^2-2 \operatorname{Tr} A B A B\right] \\
			& +12t_{4} \left[\left(\operatorname{Tr} A^2\right)^2+\left(\operatorname{Tr} B^2\right)^2\right] \\
			& +8 t_{4} [\operatorname{Tr} A^2 \operatorname{Tr} B^2 ].
		\end{aligned}
	\end{equation}
	Notice that all the epsilon terms are not included. This serendipitously implies that to leading order in $N$ all the models have the same large $N$ behaviour.
	
	The model is a bi-tracial two-matrix ensemble. In random matrix theory one is generally interested in computing moments and more generally correlation functions. The \textit{Dirac moments} are defined as follows:
	\begin{equation*}
		\frac{1}{Z} \int_{\mathcal{H}_{N}^{2}} \frac{1}{N^{2}}\tr D^{\ell} e^{-S_{\text{eff}}(A,B)}dAdB,
	\end{equation*}
	for integers $\ell \geq 0$. Let $W$ belong to the set of noncommutative polynomials in two matrix variables $\mathbb{C}[A,B]$, then the (mixed) moments are defined as
	\begin{equation*}
		\frac{1}{Z} \int_{\mathcal{H}_{N}^{2}} \frac{1}{N}\tr W e^{-S_{\text{eff}}(A,B)}dAdB.
	\end{equation*}
	Note that $A$ and $B$ are symmetric in the potential, which implies the equivalence of many moments.

	Finding moments at finite $N$ is very difficult, but generally computing their limits as $N$  goes to infinity greatly simplifies calculations, provided the limit exists. Much success has been achieved in this direction dating back to Wigner \cite{wigner1967random}, and his successors \cite{brezin1978planar}. Additionally, many universal properties have been observed in the limit \cite{deift2009random}. 
	
	For unitary invariant ensembles, one can apply techniques such as the Coloumb gas/equilibrium measure approach \cite{de1995statistical, deift1999orthogonal} or in the case of a formal integral one can apply (Blobbed) Topological Recursion \cite{eynard2007invariants,borot2014formal,borot2017blobbed}. Analytic progress has been achieved for some models that lack invariance \cite{eynard1996more,eynard2003master,eynard2004genus}, but for general potentials very little is known \cite{eynard2011formal, guionnet2005combinatorial}. Our model is particularly challenging. It clearly lacks unitary invariance, techniques such as the characteristic expansion \cite{kazakov1999two}, the Harish chandra formula \cite{itzykson1980planar}, and bi-orthogonal polynomials \cite{bertola2002duality} are not applicable. Numerical studies of these particular models have been carried out and many interesting properties have been found \cite{barrett2016monte,glaser2017scaling,barrett2019spectral}. In particular in \cite{hessam2022bootstrapping} the bootstrap technique was applied to find  explicit bounds for moments of these models in the large $N$ limit. Numerical estimates for the moments were then obtained.  While an explicit formula for any moments in terms of coupling constants escaped us at the time, this paper presents such a formula.

	\begin{theorem}\label{thm:main result}
		The formal and convergent models of \eqref{eq:partition function 1} for all three signatures have the same limiting moments. In particular, for $t_{2}$ and $t_{4}$ in a sufficiently small enough neighbourhood of zero,
		$$\lim_{N \rightarrow \infty} \frac{1}{N} \mathbb{E}[\tr A^2]= 
		\frac{1}{32 t_{4}} \left({\sqrt{{t_2}^2+8
				{t_4}}}-{{t_2}}\right).$$
	\end{theorem}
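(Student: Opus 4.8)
\bigskip
\noindent\textbf{Proof strategy.} The plan is to pass to the effective model, identify its formal and convergent versions via \cite{guionnet2005combinatorial}, and then extract $m_2:=\lim_{N\to\infty}\tfrac1N\mathbb{E}[\tr A^2]$ from the lowest-order Schwinger--Dyson equation of the model.

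\smallskip
\emph{Reduction to one ensemble.} Equality of the limiting moments across the three signatures is essentially contained in \eqref{eq: effective action}: every $\epsilon_1,\epsilon_2$-dependent term of \eqref{eq:action} is subleading in $N$ and is absent from $S_{\mathrm{eff}}$, so the three Dirac ensembles \eqref{eq:measure} obey one and the same family of large-$N$ loop equations, and it suffices to analyse $\tfrac1Z e^{-S_{\mathrm{eff}}(A,B)}\,dA\,dB$. To replace the convergent ensemble by its formal counterpart I would invoke the result of \cite{guionnet2005combinatorial}: split $S_{\mathrm{eff}}$ into its Gaussian part and the quartic bi-tracial perturbation carrying $t_4$; in the regime where that perturbation is sufficiently small, the convergent integral is analytic in the couplings and its Taylor coefficients coincide termwise with the formal (map-counting) series. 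The substance of this step is checking the hypotheses of that theorem — smallness of the perturbation together with convexity of the relevant functional on the relevant set — on a neighbourhood of the origin, using $t_4>0$ and the positivity of $\tr A^4+\tr B^4+4\tr A^2B^2-2\tr ABAB$. This is what makes it legitimate to compute $m_2$ from the perturbative/combinatorial side.

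\smallskip
\emph{The first loop equation.} Integrating by parts the vector field $A\mapsto A$, that is $\sum_{i,j}\int\partial_{A_{ij}}\!\big((A)_{ij}\,e^{-S_{\mathrm{eff}}}\big)\,dA\,dB=0$, then computing $\partial_{A_{ij}}S_{\mathrm{eff}}$, dividing by $N^2$, and using the large-$N$ factorization of products of traces together with the $A\leftrightarrow B$ symmetry of $S_{\mathrm{eff}}$ (so that $\lim_{N\to\infty}\tfrac1N\mathbb{E}[\tr B^2]=m_2$), one obtains a relation of the form
\[
1=8t_2\,m_2+16t_4\,m_4+32t_4\,m_{A^2B^2}-16t_4\,m_{ABAB}+64t_4\,m_2^2 ,
\]
where $m_4,\ m_{A^2B^2},\ m_{ABAB}$ denote the limits of $\tfrac1N\mathbb{E}[\tr A^4]$, $\tfrac1N\mathbb{E}[\tr A^2B^2]$, $\tfrac1N\mathbb{E}[\tr ABAB]$. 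This relation does not close on its own, which is where Section~3 enters.

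\smallskip
\emph{Closing the hierarchy and solving.} Here I would use the structure of the bi-coloured (unstable) maps of Section~3: their combinatorial decomposition, combined with the remaining low-order Schwinger--Dyson equations, yields an elimination identity of the form $m_4+2m_{A^2B^2}-m_{ABAB}=4m_2^2$ — a relation already visible at $t_4=0$, where $m_4\to2m_2^2$, $m_{A^2B^2}\to m_2^2$, $m_{ABAB}\to0$. Substituting it into the displayed loop equation collapses it to the single quadratic
\[
128\,t_4\,m_2^2+8\,t_2\,m_2-1=0 ,
\]
with roots $\tfrac{1}{32t_4}\big(-t_2\pm\sqrt{t_2^{\,2}+8t_4}\big)$. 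The correct branch is pinned down by the first step: the convergent moment is a power series in $t_4$ whose constant term is the Gaussian value $\lim_{N\to\infty}\tfrac1N\mathbb{E}[\tr A^2]\big|_{t_4=0}=\tfrac{1}{8t_2}$, and only the root
\[
m_2=\frac{1}{32t_4}\Big(\sqrt{t_2^{\,2}+8t_4}-t_2\Big)
\]
is analytic at $t_4=0$ with that value, which proves the claim. The step I expect to be hardest is this last one: establishing that the Schwinger--Dyson tower, once augmented by the map-combinatorial relations, genuinely truncates at low order — that $m_4$, $m_{A^2B^2}$ and $m_{ABAB}$ (and, for the twenty-moment statement of the theorem as a whole, all mixed moments up to degree twenty) are expressible through $m_2$ alone. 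By comparison, the reduction step amounts to verifying the hypotheses of \cite{guionnet2005combinatorial}, and the closing computation is routine algebra.
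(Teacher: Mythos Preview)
Your overall architecture matches the paper: reduce to the effective action, invoke \cite{guionnet2005combinatorial} to identify the formal and convergent models near the origin, and close the first Schwinger--Dyson equation using combinatorial input from the map expansion. The first SDE you wrote is correct and coincides with the paper's (Appendix~\ref{Apdx:NSDE}, word $A$), and your final quadratic $128t_4m_2^2+8t_2m_2-1=0$ and branch selection are right.

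The gap is the closure step. You assert the identity $m_4+2m_{A^2B^2}-m_{ABAB}=4m_2^2$ and justify it only by its Gaussian limit $t_4=0$; that check is no proof, and the identity does not fall out of the SDE hierarchy or of any generic map decomposition. A posteriori it is true, but it packages \emph{three} nontrivial facts at once ($m_{ABAB}=0$, $m_4=2m_2^2$, $m_{A^2B^2}=m_2^2$), and you give no mechanism for any of them. The paper's route is narrower and concrete: it proves the single vanishing
\[
m^0_{1,1,1,1}\;=\;\lim_{N\to\infty}\tfrac1N\langle\tr ABAB\rangle\;=\;0
\]
directly, by a sign-reversing correspondence on the rooted planar $2$-coloured unstable maps contributing to this moment. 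Every such map must contain a chequered quadrangle or an opposite-coloured cylinder (Lemma~\ref{Lemma 1}), and swapping one for the other flips the sign of the Feynman weight while preserving planarity (Lemma~\ref{Lemma 2}); the contributions cancel in pairs. With $m_{1,1,1,1}=0$ in hand, the paper then eliminates $m_4$ and $m_{2,2}$ by algebraically combining several of the low-order SDE's in Appendix~\ref{Apdx:NSDE} to obtain a relation of the form $m_{1,1,1,1}= (\text{quadratic in }m_2)$, and solves. So the ``combinatorial decomposition'' you allude to is, specifically, this involution on the maps enumerated by $m_{ABAB}$; once you replace your unproven elimination identity with the vanishing of $m_{ABAB}$ and the accompanying SDE algebra, your sketch becomes the paper's proof.
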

	The proof is presented in Section \ref{sec:second moment}. The idea of the proof is to first consider the formal counterpart of the model and prove such a claim using Feynman graphical techniques. Then, applying results from \cite{guionnet2005combinatorial}, we can conclude that the loop equations for both formal and convergent models have a unique solution. A list of explicit formulae for higher power moments can be found in Appendix \ref{Apdx:moments}. 	We conjecture that from the second moment all other moments and Dirac moments can be computed explicitly using  Schwinger-Dyson equations.
	
	Another quantity of interest in random matrix theory is the  so called free energy,
	\begin{equation}
		F_{0} = \lim_{N \rightarrow \infty}\frac{1}{N^2}\ln Z.
	\end{equation}
	If this limit exists, as a formal series, it counts some collection of colored planar maps. This limit does indeed exist for the formal model: see Section \ref{sec:pertub}.
	\begin{theorem}\label{thm:main result 2}
 	The formal models of \eqref{eq:partition function 1} for all three signatures have the same free energy given by
		\begin{equation}
			F_{0}=\lim_{N \rightarrow \infty}\frac{1}{N^{2}} \ln Z = -\frac{1}{2} + \frac{t_{2}}{t_{2} + \sqrt{t_{2}^{2} +8 t_{4}}} + \ln \left[ \frac{\pi^2}{ 64 t_{2}^2} \left(t_{2} + \sqrt{t^{2}_{2} + 8 t_{4}} \right)\right].
		\end{equation}
	\end{theorem}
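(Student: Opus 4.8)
The plan is to derive the free energy by integrating the Schwinger--Dyson/loop equations, using the fact (already established for the formal model, cf. Theorem \ref{thm:main result}) that the limiting moments are rational functions of $t_2$ and explicit radicals in $t_2, t_4$. The starting point is the standard identity relating the derivative of $\ln Z$ with respect to a coupling constant to a moment: differentiating \eqref{eq:partition function 1} (with the effective action \eqref{eq: effective action}) with respect to $t_2$ gives
\begin{equation*}
\frac{\partial}{\partial t_2} \ln Z = -\left\langle 4N\tr A^2 + 4N \tr B^2 \right\rangle,
\end{equation*}
so that, after dividing by $N^2$ and taking $N\to\infty$, and using the $A\leftrightarrow B$ symmetry,
\begin{equation*}
\frac{\partial F_0}{\partial t_2} = -8\, m_2, \qquad m_2 := \lim_{N\to\infty}\frac{1}{N}\langle \tr A^2\rangle = \frac{1}{32 t_4}\left(\sqrt{t_2^2 + 8 t_4} - t_2\right).
\end{equation*}
Likewise differentiating in $t_4$ expresses $\partial F_0/\partial t_4$ in terms of the fourth-order moments $\langle \tr A^4\rangle$, $\langle \tr A^2 B^2\rangle$, $\langle \tr ABAB\rangle$ and the disconnected pieces $\langle \tr A^2\rangle^2$, etc., all of which are known (rational in $m_2$) from the moment computations summarized in Appendix \ref{Apdx:moments}. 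This gives a pair of compatible first-order PDEs for $F_0$.

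First I would integrate the $t_2$-equation. Writing $\Delta := \sqrt{t_2^2 + 8 t_4}$, one has $\int \left(\Delta - t_2\right)\,dt_2$, which is elementary: $\int \Delta\, dt_2 = \tfrac12 t_2 \Delta + 4 t_4 \ln\!\left(t_2 + \Delta\right) + C(t_4)$, and $\int t_2\, dt_2 = \tfrac12 t_2^2$. Dividing by $32 t_4$ and simplifying should reproduce the first two terms of the claimed formula together with the $\ln(t_2 + \Delta)$ piece, up to a $t_4$-dependent constant of integration $C(t_4)$. Then I would pin down $C(t_4)$ by imposing consistency with the $t_4$-equation (this is where the explicit fourth-moment formulae enter), or — more cheaply — by matching to a known limiting value: as $t_4 \to 0^+$ with $t_2 > 0$ fixed, the model degenerates to a Gaussian two-matrix integral whose partition function is computable in closed form (a product of Gaussian normalizations over $2N^2$ real coordinates with the appropriate quadratic form $4 t_2 N(\tr A^2 + \tr B^2)$), giving $F_0 \to -\text{something involving } \ln t_2$; comparing the $t_4\to 0$ asymptotics of both sides fixes the constant and should produce the $\ln\!\left[\pi^2/(64 t_2^2)\right]$ normalization and the constant $-\tfrac12$.

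A subtlety to handle carefully is the normalization convention for $Z$: the formula contains $\pi^2/(64 t_2^2)$ rather than a pure power of $t_4$, which signals that $Z$ in \eqref{eq:partition function 1} is \emph{not} normalized by its Gaussian part, so the Gaussian reference integral must be evaluated honestly, including all $2\pi$ factors from the Lebesgue measure $dA\,dB$ on Hermitian matrices and the correct quadratic form. I would also need to check that the interchange of $\partial_{t_i}$ with $\lim_{N\to\infty}$ is legitimate at the level of the formal series — this follows because, as noted in the excerpt, $F_0$ exists as a formal series counting colored planar maps and differentiation acts term by term, with the moment series being its derivative by the standard map-counting correspondence (each $t_2$ or $t_4$ derivative removes a vertex and marks a face/boundary).

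The main obstacle I expect is \textbf{not} the integration itself but verifying the cross-consistency of the two PDEs, i.e.\ confirming that $\partial_{t_4}\!\left(-8 m_2\right) = \partial_{t_2}\!\left(\partial_{t_4} F_0\right)$ using the explicit fourth-order moments; this requires the fourth moments to be exactly the right rational functions of $m_2$, and any algebra slip there would propagate. An efficient route that sidesteps most of this is to treat the $t_2\to 0$ or $t_4\to 0$ boundary conditions as the primary input and use only the $t_2$-equation for the $t_2$-dependence, then invoke the known large-$N$ formula for a purely Gaussian bi-tracial model (or a one-parameter scaling argument in $t_2$) to fix $C(t_4)$ — since the claimed answer depends on $t_4$ only through $\Delta$ and the combination $t_2 + \Delta$, the constant of integration is in fact forced once the $t_4\to 0$ limit is matched, making the $t_4$-equation a consistency check rather than a necessary computation.
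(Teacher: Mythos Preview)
Your core idea matches the paper's: differentiate $F_0$ with respect to a coupling, express the result via a known moment, integrate, and fix the constant of integration using the Gaussian limit. The paper, however, chooses the other variable: it uses $\partial F_0/\partial t_4 = -d_4$ (the fourth Dirac moment, listed explicitly in Appendix~\ref{Apdx:moments}), integrates in $t_4$, and takes the Gaussian value $F_0\big|_{t_4=0}$ computed in Appendix~\ref{App:Gaussian} as the boundary condition. This route is more direct than yours: after integrating in $t_4$ the constant of integration is a function of $t_2$ alone, and the $t_4=0$ Gaussian boundary supplies precisely that function, so a single antiderivative of $d_4$ completes the proof.

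Your $t_2$-integration does work in principle, but the ``efficient route'' in your last paragraph has a genuine gap. After integrating $-8m_2$ in $t_2$ you obtain $F_0(t_2,t_4)=G(t_2,t_4)+C(t_4)$; sending $t_4\to 0$ with $t_2$ fixed yields $F_0(t_2,0)=G(t_2,0)+C(0)$, which pins down only the single number $C(0)$, not the function $C(t_4)$. Your claim that ``since the answer depends on $t_4$ only through $\Delta$, the constant is forced'' presupposes the final formula and is circular: a nonconstant $C(t_4)$ is perfectly consistent with $G$ depending on $t_4$ through $\Delta$, since for fixed $t_2$ the quantities $t_4$ and $\Delta$ are in bijection. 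To finish along your lines you are obliged to use your first option and impose the $t_4$-equation, i.e.\ bring in the fourth moments/$d_4$ --- but once $d_4$ enters, the paper's direct $t_4$-integration is strictly shorter, as it requires only $d_4$ and the Gaussian boundary rather than both $m_2$ and $d_4$.
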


	To someone familiar with the moments of matrix models, it may appear strange why these formulae are simpler than most single matrix hermitian models. Consider, for example the moments for the $(1,0)$ quartic model \cite{hessam2023double}. The reason for these more concise expressions is that the number of maps enumerated in these models is generally smaller than its single matrix cousins. This is because there are more complicated 2-cells used in the gluing and we are restricted to gluing edges that match in color.
	
	\section{The Schwinger-Dyson equations}
	The Schwinger-Dyson equations are an infinite system of non-linear recursive equations of moments that were first discovered in \cite{migdal1983loop}. They can be derived from very simple principles but can be used to deduce many properties of matrix ensembles \cite{guionnet2019asymptotics}. Processes used to solve matrix models often rely on these equations, such as topological recursion \cite{eynard2007invariants} and bootstrapping \cite{lin2020bootstraps,kazakov2022analytic}. 
	
	\subsection{Derivation and properties}
	Let $W\in \mathbb{C}\langle A,B\rangle$. The following equality follows from Stokes' theorem 
	\begin{equation}\label{eq:NSDE}
		\sum_{i,j=1}^{N}\int \frac{\partial}{\partial A_{ij}}\tr W d\mu(A,B) = 0.
	\end{equation}
	It is important to note that this equality holds for both convergent or formal matrix integrals. In the formal case, it is applied term-wise to each Gaussian integral in the formal series.  
	
	By expanding the left-hand side using the product rule one can obtain relations between (mixed) moments. For example, suppose that $W = A^{\ell}$ for some integer $\ell\geq 0$. Then equation \eqref{eq:NSDE} becomes 
	\begin{align}\label{eq:NSDE example}
	\begin{split}
		&\sum_{k=0}^{\ell -1}\mathbb{E} [\tr A^{k} \tr A^{\ell -k -1}] =  N\mathbb{E} [  8 t_{2} \tr A^{\ell+1}
		+ 16 t_{4} \tr A^{\ell+3}] \\
		&+ \mathbb{E} [ 32 t_{4} N \tr A^{\ell+1}B^2 -16 N\tr A^{\ell+1}BAB + 48 t_{4} N\tr A^2 \tr A^{\ell+1} + 16 t_{4} N \tr B^2 \tr A^{\ell+1}]
		\end{split}
	\end{align}

	Such relations are called the \textit{ Schwinger-Dyson equations} (SDE), since, unlike the usual Schwinger-Dyson equations found in single matrix models, the matrices involved may not commute, resulting in a much more vast ocean of relations to solve. Usually, one considers the generating functions of these moments to allow complex analytic techniques to solve this infinite system \cite{eynard2016counting,guionnet2019asymptotics}. However, with this model there is no clear choice of generating functions that allow for nice closed-form expressions for the SDE. Thus, in this work we are grounded to work to the level of (mixed) moments. The authors have yet to find a formula for these SDE that is concise but also informative.

	In the large $N$ limit the SDE often simplify. In particular, the factorization property 
	$$\lim_{N \rightarrow \infty} \frac{1}{N^2}\mathbb{E} [ \tr W_{1} \tr W_{2}]  = \lim_{N \rightarrow \infty} \frac{1}{N^2}\mathbb{E} [ \tr W_{1}] \mathbb{E} [ \tr W_{2}]  $$
	is exploited when possible for $W_{1},W_{2} \in \mathbb{C}\langle A,B\rangle $. In formal Hermitian matrix models this property follows from the genus expansion \cite{eynard2016counting}. From theorem 3.1 of \cite{khalkhali2022spectral}, models such as  \eqref{eq:partition function 1} have a genus expansion and hence satisfy this property. For details on how the genus expansion implies this property see the appendix of \cite{hessam2022bootstrapping}. In the large $N$ limit we introduce the following notation for the (mixed) moments of the convergent ensemble
	$$ m_{\ell_{1},\ell_{2},...,\ell_{q}}= \lim_{N \rightarrow \infty}\frac{1}{N}\mathbb{E}[\tr A^{\ell_{1}}B^{\ell_{2}} \cdots A^{\ell_{q-1}} B^{q}], $$
	and 
	$$ m^{0}_{\ell_{1},\ell_{2},...,\ell_{q}}= \lim_{N \rightarrow \infty}\frac{1}{N}\langle \tr A^{\ell_{1}}B^{\ell_{2}} \cdots A^{\ell_{q-1}} B^{q}\rangle $$
	for the formal ensemble.
	This notation is well-defined, since the model is symmetric in $A$ and $B$.

	We are interested in the SDE in the large $N$ limit. For example, in the formal case, after normalizing equation \eqref{eq:NSDE example} and taking the limit, the result is
	\begin{align*}
		\sum_{k=0}^{\ell -1} m^{0}_{k} m^{0}_{\ell -k -1} =    8 t_{2} m^{0}_{\ell+1}
		+ 16 t_{4} m^{0}_{\ell+3}+  32 t_{4} m^{0}_{\ell+1,2} -16  m^{0}_{\ell+1,1,1,1} + 64 t_{4} m^{0}_{\ell+1}m^{0}_{2}, 
	\end{align*}
	for integer $\ell \geq 0$.
	
	For any choice of initial word, such equations can be deduced. For more examples of these equations for general words, see Appendix \ref{Apdx:NSDE}.

	\subsection{From perturbative expansion to convergent integrals } \label{sec:perturb to convergent}
	As mentioned in the introduction, our strategy is to solve the formal model corresponding to \eqref{eq:partition function}, and then use known results to relate the solution to its convergent counterpart. To do this, we will use the results in \cite{guionnet2005combinatorial} and adapt them for our bi-tracial model.
	
	Consider the following formal matrix model
	\begin{equation}
		\int_{\mathcal{H}_{N}^{2}} e^{-\tr V(A,B)} dA dB,
	\end{equation}
	where 
	\begin{equation}\label{eq: extra effective action}
		\begin{aligned}
			V(A,B) = & 4 t_2 N \left[\operatorname{Tr}A^2+\tr B^2\right] \\
			& +4 t_{4} N\left[\operatorname{Tr} A^4 +\tr B^4+4 \operatorname{Tr} A^2 B^2-2 \operatorname{Tr} A B A B\right] \\
			& +32 t_{4} [\operatorname{Tr} A^2 m^{0}_{2} + m^{0}_{2} \tr B^{2} ].
		\end{aligned}
	\end{equation}
	\begin{lemma}
		Up to the leading order in $N$, the loop equations for the model \eqref{eq: extra effective action} and the model  \eqref{eq: effective action} are the same. 
	\end{lemma}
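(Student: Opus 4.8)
The plan is to show that the two bi-tracial potentials differ only in terms that, after the standard large-$N$ loop-equation manipulation, contribute to the Schwinger--Dyson equations through a single scalar: the limiting moment $m^0_2 = \lim_{N\to\infty}\frac1N\langle\tr A^2\rangle = \lim_{N\to\infty}\frac1N\langle\tr B^2\rangle$. First I would compare \eqref{eq: effective action} and \eqref{eq: extra effective action} term by term. The quadratic pieces $4t_2 N[\tr A^2+\tr B^2]$ and the genuinely two-matrix quartic pieces $4t_4 N[\tr A^4+\tr B^4+4\tr A^2B^2-2\tr ABAB]$ are literally identical in the two models, so they need no attention. The only discrepancy is that $S_{\text{eff}}$ carries the genuine multi-trace terms $12 t_4[(\tr A^2)^2+(\tr B^2)^2]+8t_4[\tr A^2\tr B^2]$, whereas $V$ replaces these by the \emph{linearized} terms $32 t_4[\tr A^2\, m^0_2 + m^0_2\,\tr B^2]$, in which one factor of each product of traces has been frozen to its large-$N$ value.

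Next I would recall how such multi-trace terms enter the loop equations. Differentiating $e^{-\tr V}$ (or $e^{-S_{\text{eff}}}$) with respect to an entry $A_{ij}$ inside \eqref{eq:NSDE}, a term like $12 t_4(\tr A^2)^2$ produces $48 t_4\,\tr A^2\cdot \mathbb{E}[\,\cdots \tr A^{\cdot+1}]$-type contributions, i.e. a product of two traces weighted by $N$; by the factorization property stated in the excerpt (which holds for these models by the genus expansion of \cite{khalkhali2022spectral}), such a product factorizes in the large-$N$ limit as $\lim_{N\to\infty}\frac1{N^2}\mathbb{E}[\tr W_1\tr W_2] = \lim\frac1N\mathbb{E}[\tr W_1]\cdot\lim\frac1N\mathbb{E}[\tr W_2]$, so in the limiting loop equation $\tr A^2$ may be replaced by its normalized limit $N m^0_2$. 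On the other side, the linearized term $32 t_4\,\tr A^2\, m^0_2$ in $V$ differentiates to exactly $64 t_4\, m^0_2\cdot\mathbb{E}[\cdots\tr A^{\cdot+1}]$ already with no product of traces, so it contributes the \emph{same} coefficient to the large-$N$ loop equation without needing factorization. Matching the numerical coefficients — $12 t_4$ with its two appearances of $\tr A^2$ giving $48 t_4$, versus $32 t_4$ with a single $\tr A^2$ giving $64 t_4$; and checking the cross term $8 t_4\tr A^2\tr B^2$ against $32 t_4(\tr A^2 m^0_2 + m^0_2\tr B^2)$ similarly — is the routine bookkeeping that I would carry out, using the $A\leftrightarrow B$ symmetry and the fact that $m^0_2$ is the common value for both $A$ and $B$.

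So the key steps, in order, are: (i) isolate the terms in which the two potentials differ, namely the multi-trace quartic pieces; (ii) write out the contribution of each such term to the general loop equation \eqref{eq:NSDE} via the product rule, for both an $A$-derivative and (by symmetry) a $B$-derivative; (iii) apply the factorization property to the genuine multi-trace contributions coming from $S_{\text{eff}}$ to reduce each product of traces to a $N m^0_2$ factor times a single trace; (iv) verify that the resulting coefficients coincide term by term with those produced by the already-linearized potential $V$, for every choice of test word $W\in\mathbb{C}\langle A,B\rangle$. Conclude that the two families of limiting loop equations are identical.

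\textbf{Main obstacle.} The conceptual content is light; the one place where care is genuinely needed is step (iii): I must be sure that the factorization property applies uniformly to all the product-of-traces terms that arise, and in particular that the sub-leading (in $N$) corrections one drops are genuinely sub-leading after the $\frac1N$ normalization of the loop equation. This is legitimate because \cite{khalkhali2022spectral} provides a full genus expansion for models of the form \eqref{eq:partition function 1}, so $\frac1{N^2}\mathbb{E}[\tr W_1\tr W_2]$ and $\frac1N\mathbb{E}[\tr W_1]\cdot\frac1N\mathbb{E}[\tr W_2]$ agree to leading order; but one should state this invocation explicitly rather than treat factorization as automatic. A secondary subtlety is that $m^0_2$ appearing \emph{inside} the potential $V$ makes \eqref{eq: extra effective action} a self-referential definition — one should note that this is harmless at the level of formal loop equations, since $m^0_2$ is just a fixed real parameter there, and its actual value is pinned down consistently by the very loop equations one derives (indeed by Theorem \ref{thm:main result}).
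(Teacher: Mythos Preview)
Your proposal is correct and is exactly the argument one would expect here; in fact the paper leaves this proof empty (the \texttt{proof} environment contains nothing), so you have supplied what the authors evidently regarded as routine. Your coefficient check is right: differentiating the multi-trace part of $S_{\text{eff}}$ in $A$ yields $(48t_4\,\tr A^2 + 16t_4\,\tr B^2)A_{ji}$, which after factorization and the $A\leftrightarrow B$ symmetry becomes $64t_4\, m^0_2$, matching precisely the $64t_4\, m^0_2$ produced by differentiating the linearized term $32t_4\,m^0_2\,\tr A^2$ in $V$ (and the $B$-derivative is identical by symmetry). The two subtleties you flag --- the need to invoke the genus expansion of \cite{khalkhali2022spectral} to justify factorization, and the self-referential appearance of $m^0_2$ inside $V$ --- are exactly the right caveats to record.
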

	\begin{proof}
		
	\end{proof}

	Consider the potential $V$ as a map
	$$V : \{A_{i j}, B_{k \ell}|1 \leq i,j,k,\ell\leq N\} \rightarrow \mathbb{R}.$$
	Assume $t_{2} >0$ and $t_{4} \geq 0$. The first line of equation \eqref{eq: extra effective action} is the positive sum of convex functions so it is also convex. The second line of terms in equation \eqref{eq: extra effective action} can be expressed as
	$$ \tr (AB -BA)^2 + \tr (A^2 + B^2)^2$$
	multiplied by a positive number. Hence, it is also convex. Lastly, by the positivity of the integrand, for any $N$, $m_{2} \geq 0$ and finite, so the last line is also convex. Thus, there exists a non-empty set $U$ of coupling constants such that the action $V$ is convex. 
	
	With this above observation, one can apply theorem 1.1 from \cite{guionnet2005combinatorial} to the model defined by $V$. Since the moments and SDE's of the models \eqref{eq: extra effective action} and \eqref{eq: effective action} are the same in the limit, the result applies to the latter model, giving us the following theorem.
	\begin{theorem} \label{thm: convergent}
		There exists an $\epsilon >0$ such that, for $t_{2},t_{4}\in U \cap B_{\epsilon}(0)$ and any word $W \in \mathbb{C}\langle A, B\rangle$,
		$$\lim_{N\rightarrow \infty}\frac{1}{N} \mathbb{E}[\tr W]$$ 
		converges to the unique solution to the SDE's of the effective ensemble.
	\end{theorem}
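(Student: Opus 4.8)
The plan is to reduce the statement to a direct application of Theorem~1.1 of \cite{guionnet2005combinatorial}, with the bi-tracial interaction absorbed by the mean-field substitution recorded in the Lemma above. First I would note that it suffices to prove the claim for the model with potential $V$ of \eqref{eq: extra effective action} in place of the effective action \eqref{eq: effective action}: by the Lemma the two models have the same loop equations up to leading order in $N$, so once the limiting moments of one are known to exist and to solve the common large-$N$ Schwinger-Dyson system, the same holds for the other and the limits agree. What makes \eqref{eq: extra effective action} tractable is that it is a genuine single-trace polynomial potential: its last line is $32 t_4 m^0_2 \operatorname{Tr}(A^2 + B^2)$, a coupling-dependent but $N$-independent quadratic term, and for every finite $N$ the constant $m^0_2$ is finite and nonnegative by positivity of the integrand, so the hypotheses of \cite{guionnet2005combinatorial} --- which are stated for potentials of the form $N\operatorname{Tr} V$ --- are meaningful here once normalisations are matched, the $m^0_2$ term being a $1/N$ correction to the mass.

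Next I would verify those hypotheses on $U \cap B_\epsilon(0)$. Convexity of $V$ as a function on $\mathcal{H}_N^2 \cong \mathbb{R}^{2N^2}$ for couplings in a non-empty open set $U$ is exactly the content of the paragraph preceding this theorem (the quartic part being a positive multiple of $\operatorname{Tr}(AB-BA)^2 + \operatorname{Tr}(A^2+B^2)^2$, the remaining terms convex). For $\epsilon$ small enough, $(t_2,t_4)$ lies in the perturbative window required by \cite{guionnet2005combinatorial}, so Theorem~1.1 there applies and yields: the moments $\tfrac1N \mathbb{E}[\operatorname{Tr} W]$ converge as $N \to \infty$; the limit is the generating series of the corresponding (bi-coloured, bi-tracial) maps; and this limit is the unique solution of the large-$N$ Schwinger-Dyson equations in the class of tracial states whose moments grow at most like $C^{\deg W}$ for a suitable $C$, uniqueness following from the fact that for small couplings the fixed-point operator associated with the SDE system is a contraction on that space. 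Transporting this conclusion back through the Lemma gives the statement for the effective ensemble \eqref{eq: effective action}; solving these Schwinger-Dyson equations explicitly, which is carried out in the later sections, then produces closed forms such as the one in Theorem~\ref{thm:main result}.

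The step I expect to be the main obstacle is not the invocation of \cite{guionnet2005combinatorial} but the legitimacy of the mean-field replacement, that is, the Lemma itself: one must check that replacing $12 t_4[(\operatorname{Tr} A^2)^2 + (\operatorname{Tr} B^2)^2] + 8 t_4 \operatorname{Tr} A^2 \operatorname{Tr} B^2$ by $32 t_4 m^0_2 \operatorname{Tr}(A^2 + B^2)$ genuinely preserves the loop equations to leading order --- a bookkeeping computation using the factorization property, already visible in the derivation of the large-$N$ SDE --- that the self-consistent constant $m^0_2$ sitting inside $V$ neither collapses the convexity set $U$ nor interferes with the concentration estimates of \cite{guionnet2005combinatorial}, and that $\epsilon$ can be chosen uniformly so that the fixed point determining $m^0_2$ and the fixed points determining all higher moments are reached simultaneously. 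Once these points are settled, the remainder is a matter of matching the conventions of \cite{guionnet2005combinatorial}.
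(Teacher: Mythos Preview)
Your proposal is correct and follows essentially the same route as the paper: replace the bi-tracial effective action by the single-trace potential $V$ via the Lemma, check convexity of $V$ on a non-empty set $U$, invoke Theorem~1.1 of \cite{guionnet2005combinatorial}, and transport the conclusion back through the Lemma. Indeed, the paper's argument is just the one-sentence ``With this above observation, one can apply theorem~1.1\ldots''; your write-up is more explicit, and you are right to flag that the real work lies in the Lemma (whose proof the paper leaves empty) and in checking that the self-consistent constant $m^0_2$ inside $V$ does not spoil the hypotheses of \cite{guionnet2005combinatorial}.
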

	
	This theorem implies that in some small ball of the coupling constants near zero, the convergent and formal models coincide.

	\section{The perturbative expansion}\label{sec:pertub}
	As mentioned in the introduction, a formal matrix model is a well-defined formal generating function of Gaussian matrix integrals that is constructed by expanding all non-Gaussian terms of the potential and then interchanging the order of integration and summation. The resulting Feynman diagrams of such an integral are maps (or their dual fat graphs) \cite{brezin1978planar}. This follows from the fact that Gaussian matrix integrals can be expanded in terms of maps, which can then be organized by the genus of the associated surface. In this work, we are interested in the types of maps that come from 2-matrix integrals with bi-tracial interactions, which will be introduced in the following sections.
	
	\subsection{A primer on maps} \label{sec: primer on maps}
	We will begin by introducing some general terminology on maps. A \textit{map} of genus $g$ is a  2-cell embedding of a graph into an oriented surface of genus $g$  up to orientation-preserving homeomorphisms of the surface. In this work we are focused on maps with connected graphs of genus zero, which we will refer to as \textit{planar maps}.

	Maps can be constructed by gluing the edges of polygons in an orientation-preserving manner, i.e. no twists. The unglued edges of polygons are referred to as half-edges. A \textit{rooted} map is a map with a distinguished rooted edge. Rooted maps appear when computing moments and cumulants while unrooted maps appear when computing the partition function. In particular, cumulants and the logarithm of the partition function count connected maps. Note that maps have an associated topological invariant known as their \textit{genus}, which can be computed using Euler's formula.
	
	For our model we are interested in \textit{2-colored unstable planar maps}. A 2-colored map is a map whose half-edges have one of two assigned colors. Such colors have to match that of the other half-edge they are glued to in order to form such a map. An unstable map is a map that is glued from 2-cells whose topology corresponds to unstable Riemann surfaces with boundaries i.e. a disc or cylinder. 
	
	\begin{figure}[H]
		\centering
		\includegraphics[width=0.5\textwidth]{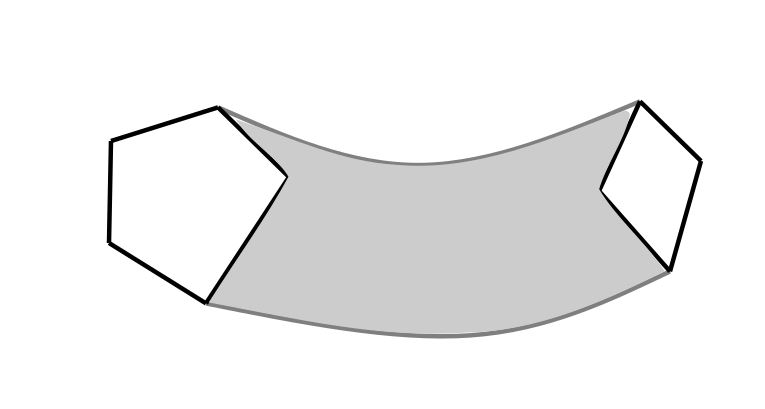}
		\caption{An example of a 2-cell with the topology of a cylinder where one boundary is a 5-gon and the other is a quadrangle}
	\end{figure}

Note that an ordinary rooted connected map glued from only polygons is planar if and only if its Euler characteristic is two. However, in unstable maps the notation of graph connectedness and map connectedness no longer coincide, so Euler's formula for genus is not always directly applicable. Unstable maps have a decomposition into graph connected components, by treating each 2-cell with the topology of a cylinder as two disconnected 2-cells with the topology of a disc. The removed part we will refer to as a \textit{branch}. Thus we can associate every unstable map with a graph, where each edge is a branch and each vertex is a graph component. One sufficient condition for an unstable map to be planar is if each graph component is planar and the associated graph described above is a tree. See \cite{khalkhali2022spectral} for more details. 
	
	The enumeration of colored maps has long been of interest in the study of formal matrix integrals, but work on unstable maps has more recently been approached in \cite{khalkhali2022spectral,hessam2023double} as well as within the more general notion of stuffed maps \cite{borot2014formal,borot2017blobbed}.  As far as the authors are aware, the enumeration of  maps with both qualities has not appeared in any works before.
	\begin{figure}[H]
		\centering
		\includegraphics[width=0.5\textwidth]{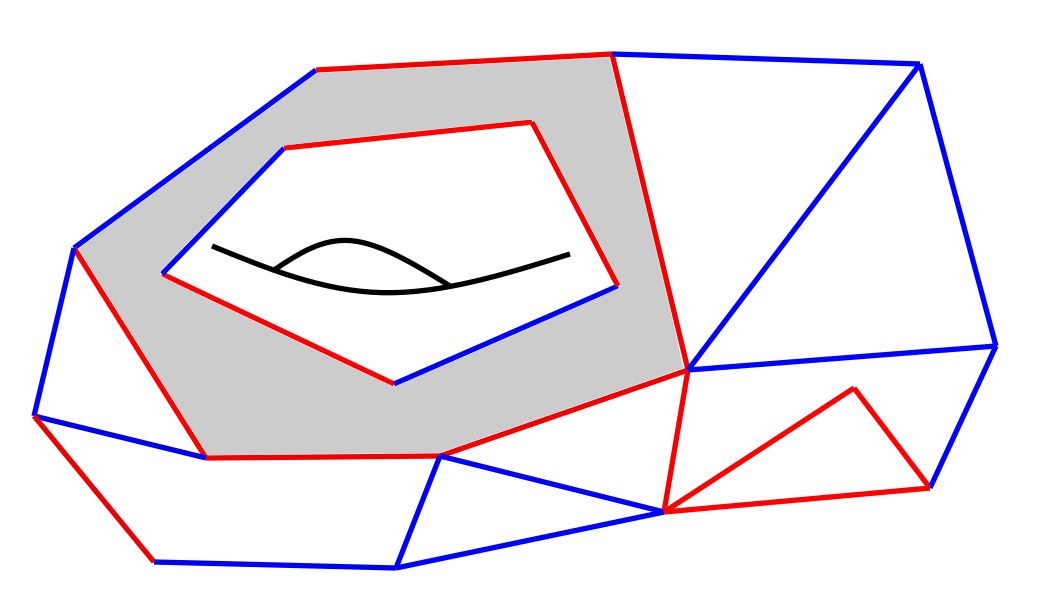}
		\caption{An example of a  2-colored unstable map of genus one.}
	\end{figure}

	\subsection{From matrix models to map enumeration}
	Let us consider the model with the action \eqref{eq: effective action} formally but we will add a redundant parameter $t$ initially, which will keep track of the number of vertices:
	\begin{equation}
		Z=\int_{\mathcal{H}_{N}^{2}}e^{- \frac{1}{t}S_{\text{eff}}(A,B)} dAdB.
	\end{equation}
	The propagators for the Gaussians are
	$$\langle A_{ij} A_{k \ell } \rangle = \frac{t}{8N}\delta_{i\ell}\delta_{jk }$$
	and 
	$$\langle B_{ij} B_{k \ell }\rangle  = \frac{t}{8N}\delta_{i\ell}\delta_{jk },$$
	where the entries of $A$ and $B$ are independent at the level of Gaussian integrals in the formal sum.
	
	Via standard techniques \cite{khalkhali2022spectral},
	the model has a  has a genus expansion, i.e. the moments can be written as 
	\begin{equation*}
		m_{W} = \sum_{g \geq 0} \left( \frac{N}{t}\right)^{1-2g}m^{g}_{W},
	\end{equation*}
	where 
	\begin{equation}\label{eq:moments genus expansion}
		m^{g}_{W} = \sum_{v=1}^{\infty} t^{v}\sum_{\Sigma \in \mathcal{UM}^{g}_{W}(v)} \frac{W(\Sigma)}{|\text{Aut}(\Sigma)|}. 
	\end{equation}
	The set $\mathcal{UM}^{g}(v)$ is the set of all  of genus $g$ 2-colored unstable maps with $v$ vertices glued from a rooted polygon whose coloring corresponds to the word $W$ and the following set of 2-cells:
	
	\begin{enumerate}
		\item A red quadrangle
		\item A blue quadrangle
		\item A quadrangle with two adjacent red edges, and two adjacent blue edges
		\item A quadrangle with two red edges whose neighbours are blue edges 
		\item A red 2-cell with the topology of a cylinder and boundaries of length two
		\item A blue 2-cell with the topology of a cylinder and boundaries of length two
		\item A 2-cell with the topology of a cylinder and boundaries of length two, one red and one blue.
	\end{enumerate}
	See Figure \ref{fig:2-cells for model} for a visualization of these 2-cells. 
	
	The realization of the correspondence of colored polygons  to cyclic words can be described as follows. The trace of a word $W$  of length $\ell$ in the alphabet formed from $A$ and $B$ has a corresponding cyclic sequence of colors of length $\ell$. This cyclic sequence of coloring is then mapped to the colors of edges of an $\ell$-gon.  See for example Figure \ref{fig:word realization }.

	\begin{figure}[H]
		\centering
		\includegraphics[width=0.5\textwidth]{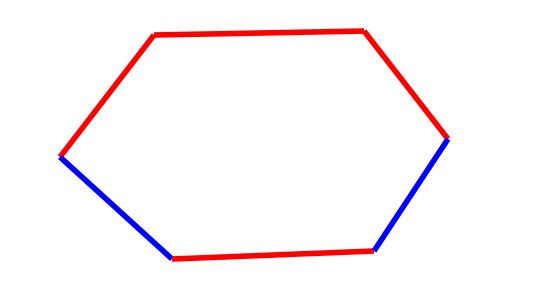}
		\caption{Moments corresponding to cyclic words $W$ in $A$ and $B$ correspond to rooted polygons with alternating colored half-edges. For example the word $AAABAB$ corresponds to the above hexagon, before choosing a root.}
		\label{fig:word realization }
	\end{figure}
	
	\begin{figure}[H]
		
		\begin{subfigure}{0.40\textwidth}
			\includegraphics[width=\textwidth]{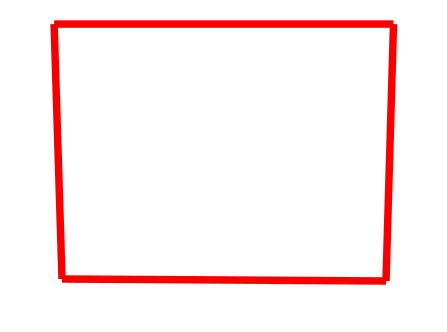}
			\caption{A blue quadrangle.}
		\end{subfigure}\hfill
		\begin{subfigure}{0.40\textwidth}
			\includegraphics[width=\textwidth]{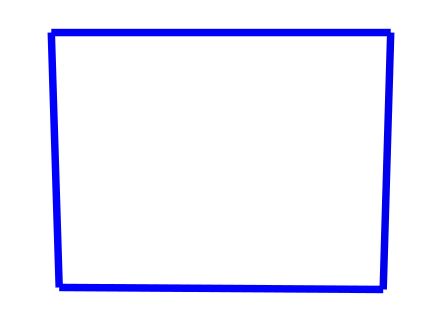}
			\caption{A red quadrangle.}
		\end{subfigure}\hfill
		\begin{subfigure}{0.40\textwidth}
			\includegraphics[width=\textwidth]{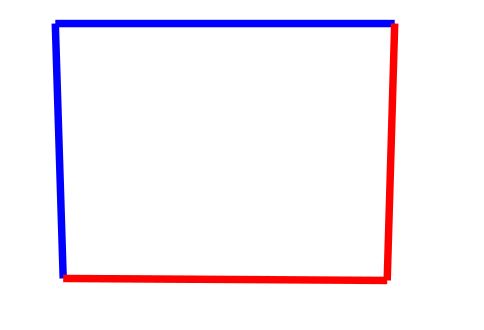}
			\caption{An adjacent colored quadrangle.}
		\end{subfigure}\hfill
		\begin{subfigure}{0.40\textwidth}
			\includegraphics[width=\textwidth]{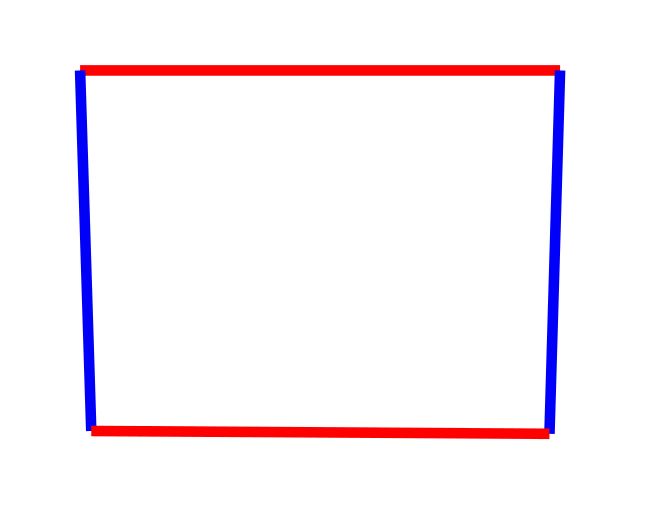}
			\caption{A chequered colored quadrangle.}
		\end{subfigure}
		
		\begin{subfigure}{0.40\textwidth}
			\vspace{10pt}
			\includegraphics[width=\textwidth]{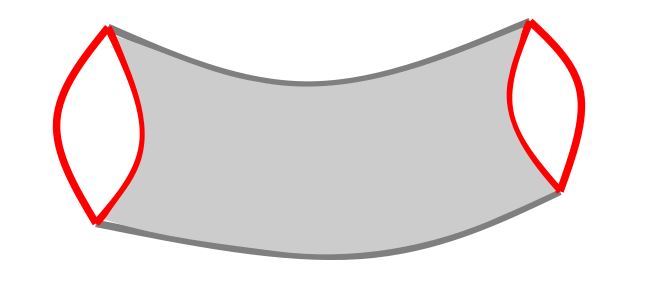}
			\caption{A red cylinder.}
		\end{subfigure}\hfill
		\begin{subfigure}{0.40\textwidth}
			\vspace{10pt}
			\includegraphics[width=\textwidth]{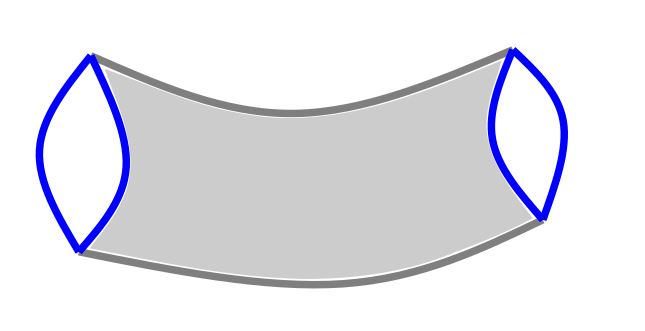}
			\caption{A blue cylinder}
		\end{subfigure}\hfill
		\begin{subfigure}{0.40\textwidth}
			\vspace{10pt}
			\includegraphics[width=\textwidth]{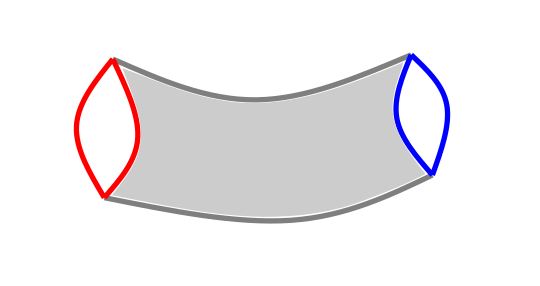}
			\caption{An opposite colored cylinder.}
		\end{subfigure}
		
		\caption{All types of 2-cells that are used in gluings of 2-colored unstable maps enumerated by the model.}
		\label{fig:2-cells for model}
		
	\end{figure}

	In equation \eqref{eq:moments genus expansion}, the Feynman weight of a map $\Sigma \in \mathcal{UM}^{g}_{W}(v)$ is given by
	\begin{equation}\label{eq:Feynman weight}
		W(\Sigma) = (16t_{4})^{n_{1}+n_{2}+n_{3}+n_{4}} (64 t_{4})^{n_{4}+n_{5}} (-16 t_{4})^{n_{7}},
	\end{equation}
	where $n_{i}(\Sigma)$, for $1\leq i\leq 7$, is the number of 2-cells corresponding to the numbers above used in the gluing of the map $\Sigma$. The coefficients in front of these weights come from a rescaling needed to construct the factor $|\text{Aut}|$ in equation \eqref{eq:moments genus expansion}. Usually, potentials are nicely normalized so that the Feynman weight is precisely the product of coupling constants. However, because there is one coupling constant in front of many terms in the effective action, this is not possible with our model. Because of this, information that helps distinguish components is lost in the final expressions, which we will find actually simplifies matters.


	\subsection{The second moment}\label{sec:second moment}
 In this section we will derive our formula for the second moment, but we must first look at another moment. Consider 
	$$m^{0}_{1,1,1,1} = \sum_{v=1}^{\infty} t^{v}\sum_{\Sigma \in \mathcal{UM}^{0}_{ABAB}(v)} \frac{\Gamma (\Sigma)}{|\text{Aut}(\Sigma)|}. $$
	Our goal is to show that this formal series is precisely zero for our model. Note that the set $\mathcal{UM}^{0}_{ABAB}(v)$ is not empty for all $v$, for an example see  Figure \ref{fig:map in set example}. Rather, we will show that the formal series is zero by showing that all positive contributions cancel with contributions from the negative sign in the Feynman weight corresponding to chequered unrooted quadrangles. 
	
	\begin{figure}[H]
		\centering
		\includegraphics[width=0.8\textwidth]{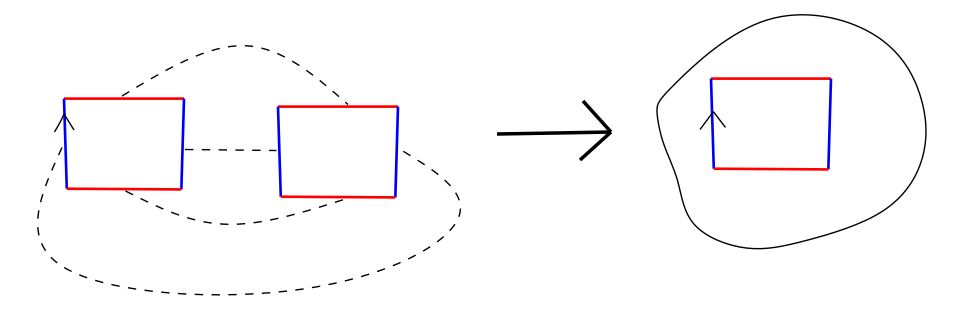}
		\caption{An example of a map in $\mathcal{UM}^{0}_{ABAB}(4)$.}
		\label{fig:map in set example}
	\end{figure}
	
	To do so, we must first study the set $\mathcal{UM}^{0}_{ABAB}(v)$. When $v=1$ or 2,the set is empty since there is no planar gluing of a rooted quadrangle with the coloring corresponding to $ABAB$. For $v>2$, the sets are not necessarily empty, but the following fact may be observed for all $v$.

	\begin{lemma}
		Any map in $\mathcal{UM}^{0}_{ABAB}(v)$ must contain at least one chequered  colored quadrangle or an opposite colored cylinder.   
	\end{lemma}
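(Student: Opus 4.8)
The plan is to first dispose of the easy case and then reduce to an ordinary planar map. If $\Sigma \in \mathcal{UM}^{0}_{ABAB}(v)$ already contains an opposite coloured cylinder we are done, so assume it does not; then every cylinder 2-cell of $\Sigma$ is monochromatic (red or blue). Cutting each such cylinder along its core circle, exactly as in the graph-component decomposition recalled in Section~\ref{sec: primer on maps}, replaces it by two monochromatic digons joined by a branch and exhibits $\Sigma$ as a tree of planar graph-components glued only from the four quadrangle types together with red and blue digons; the root quadrangle, coloured $ABAB$, is untouched and lies in exactly one component $C_{0}$. Since cutting cylinders never destroys a quadrangle, it suffices to produce a chequered quadrangle inside $C_{0}$.

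Inside the ordinary planar map $C_{0}$, drawn on the sphere with the root face $\rho$ distinguished, I would study the subgraph $G_{R}\subseteq C_{0}$ of all red edges (with their endpoints) and its induced embedding, or dually the red-edge subgraph of $C_{0}^{*}$. The two red edges $e_{1},e_{3}$ of $\rho$ lie on the boundary of the face $\Phi$ of $G_{R}$ containing $\operatorname{int}(\rho)$, while the two blue edges $e_{2},e_{4}$ lie in $\operatorname{int}(\Phi)$; because $\rho$ is coloured $ABAB$ and not $AABB$, the edges $e_{1}$ and $e_{3}$ are \emph{not} consecutive on $\partial\rho$, so inside $\Phi$ the quadrangle $\rho$ behaves like a band joining two arcs of $\partial\Phi$ and cuts off a sub-region on the $e_{2}$-side whose only blue boundary edge is $e_{2}$. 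The core of the argument should then be a parity/counting statement: in that sub-region the blue half-edges must pair up internally, and of the 2-cells that can carry a blue edge there — blue quadrangles, blue digons, adjacent quadrangles, and chequered quadrangles — the first three contribute an even number of blue half-edges while $\rho$ contributes the single ``defect'' half-edge $e_{2}$, so the region cannot close up unless a chequered quadrangle is present. Concretely I would organise this as an induction on the number of faces of $C_{0}$, peeling off blue quadrangles, blue digons and adjacent quadrangles one at a time, tracking the $ABAB$-type blue defect, until a chequered quadrangle (equivalently, back in the uncut picture, an opposite cylinder) is forced.

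The step I expect to be the main obstacle is controlling the global topology of the face $\Phi$ of $G_{R}$: in an unstable map, and even after cutting the cylinders, the red-edge subgraph $G_{R}$ need not be connected, so $\Phi$ need not be a disc and $\rho$ need not split it cleanly into two discs — nested red cycles, red edges that are bridges of $G_{R}$, and the possibility that $e_{1}$ is glued to $e_{3}$ all have to be accounted for, and a naive ``cut the disc in two'' version of the parity count is too crude. I would handle this by passing to the tree of graph-components of $G_{R}$ as well, localising the argument to the component(s) of $G_{R}$ that meet $\rho$ and running the parity induction relative to that tree, so that ``the blue defect of the $ABAB$ face must be absorbed by a chequered quadrangle'' becomes a statement stable under the decomposition; the base cases $v=1,2$ are vacuous since $\mathcal{UM}^{0}_{ABAB}(v)=\varnothing$ there. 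Once established, the lemma feeds directly into the cancellation argument of Section~\ref{sec:second moment}, since every map in $\mathcal{UM}^{0}_{ABAB}(v)$ then carries at least one 2-cell contributing the negative factor in the Feynman weight~\eqref{eq:Feynman weight}.
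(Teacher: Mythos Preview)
Your proposal carries the right topological idea --- the $ABAB$ colouring of the root forces a red/blue crossing that only a chequered quadrangle or opposite cylinder can accommodate --- but the paper's argument is much more direct than your cut-and-parity scheme. The paper never cuts cylinders or studies the face structure of $G_R$; it simply argues by contradiction. Assuming no chequered quadrangles or opposite cylinders, start at one red half-edge of the rooted face and follow the gluing: the only 2-cells with red half-edges are then red quadrangles, adjacent quadrangles, and red cylinders, each carrying an even number of them, so (by pigeonhole, since the root contributes the only two ``odd'' red half-edges) the walk must eventually close up at the other red half-edge of the root, giving a closed red loop through $\rho$. The identical reasoning gives a closed blue loop through $\rho$. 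Because $\rho$ is coloured $ABAB$ these two loops cross once inside $\rho$; on the sphere they must then cross again outside, but in every remaining 2-cell --- in particular in an adjacent ($AABB$) quadrangle, where the red pair and the blue pair of edges are each adjacent --- a red path and a blue path can be drawn disjointly. That is the contradiction, obtained in one stroke with no induction and no need to control whether $\Phi$ is a disc.

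On your parity count: the clause ``the first three contribute an even number of blue half-edges while $\rho$ contributes the single defect half-edge $e_2$'' is not the invariant that singles out chequered quadrangles, since a chequered quadrangle also has exactly two blue half-edges. What you actually need is whether a face can have its two red edges on \emph{different} arcs of $\partial\Phi$ (the arcs determined by $e_1,e_3$): an adjacent quadrangle cannot, because its red edges are adjacent and so lie on one arc, whereas a chequered quadrangle can, because its red edges are opposite. Only the latter can therefore straddle your sub-region and contribute an odd number of blue half-edges to one side. Reformulated this way, your sub-region parity is precisely the dual of the paper's intersection-number argument, and the obstacle you flagged (non-simply-connected $\Phi$, disconnected $G_R$) disappears once you track a single red loop rather than all of $G_R$ at once.
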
\label{Lemma 1}

	\begin{proof}
		If $v=1,2$ the claim obviously holds, so let $v>2$. Consider some map $\Sigma \in \mathcal{UM}^{0}_{ABAB}(v)$ with no chequered colored quadrangles or opposite colored cylinders.  We will show such a map cannot exist. Without loss of generality, consider one of the red half-edges of the rooted face. It must be paired to another red half-edge. There are three options: a half-edge of an unrooted red quadrangle, an adjacent colored quadrangle, or a red cylinder. Note that, since the number of vertices is strictly greater than one, the edge cannot be paired with the other red half-edge of the rooted face.   In all cases, this new 2-cell must connect to another distinct red coloured 2-cell, since after considering the initial half-edge, there are an odd number of half-edges remaining. Since there are finitely many 2-cells with an even number of red half-edges used in a gluing that all need to be paired with half-edges of the same color, it must eventually connect to the other red half-edge of the rooted 2-cell by the pigeonhole principle.
		
		This above argument holds for the blue half-edges of the rooted fat vertex as well. Thus, $\Sigma$ must have at least two closed loops of colored edges that can be traced to and from the rooted face. No such map can be embedded into a sphere since this would result in these two different colored loops crossing, which is impossible without an chequered colored quadrangle.
	\end{proof}

	\begin{lemma}\label{Lemma 2}
	 For any rooted unstable colored map $\Sigma_{1}$ containing chequered colored quadrangle there is another such map with Feynman weight $W(\Sigma_{1})$. Similarly, for any rooted unstable colored map $\Sigma_{2}$ containing an opposite colored cylinder, there exists another such map with Feynman weight $W(\Sigma_{2})$ such that $W(\Sigma_{1}) = -W(\Sigma_{2})$.
	\end{lemma}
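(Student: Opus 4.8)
The plan is to realise the asserted pairing as an explicit sign-reversing involution on $\mathcal{UM}^{0}_{ABAB}(v)$ built from one local surgery. The surgery, call it a \emph{colour-type toggle}, takes a (non-root) chequered coloured quadrangle $Q$ inside a map $\Sigma$ --- a disc $2$-cell whose four half-edges are, in cyclic order, red, blue, red, blue --- and replaces it by an opposite coloured cylinder $C$, i.e.\ a cylinder $2$-cell one of whose boundary bigons carries the two red half-edges of $Q$ and the other the two blue half-edges of $Q$; every other half-edge of $\Sigma$ and every gluing of $\Sigma$ is left fixed, only the topology of that one $2$-cell changes. Since a chequered quadrangle and an opposite cylinder carry exactly the same multiset of half-edge colours (two red, two blue) and both belong to the repertoire of allowed $2$-cells, this is well defined on the underlying combinatorial data and returns another $2$-coloured unstable map with the same rooted edge, the same boundary word $ABAB$, and the same number $v$ of vertices (non-root $2$-cells); it is manifestly an involution, in that toggling $C$ back to a chequered quadrangle recovers $Q$. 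To promote this to a globally defined involution one applies the toggle to a \emph{canonically chosen} special cell --- say the first chequered quadrangle or opposite cylinder met in the canonical traversal of $\Sigma$ from the rooted edge --- which exists for every element of $\mathcal{UM}^{0}_{ABAB}(v)$ by the previous lemma, and which is preserved by the toggle because the traversal is determined by data (half-edges, gluings, root) untouched up to and including that cell.

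The real content is to check that the toggle preserves planarity, i.e.\ that $\Sigma\in\mathcal{UM}^{0}_{ABAB}(v)$ forces the toggled map $\Sigma'$ to lie in $\mathcal{UM}^{0}_{ABAB}(v)$ too. Here I would use the criterion for planarity of unstable maps from \cite{khalkhali2022spectral}: split a map into its graph-connected components by cutting each cylinder $2$-cell into two disc $2$-cells joined by a branch; the map is planar iff every graph-component is planar and the tree of components is a tree. Removing $Q$ from $\Sigma$ (keeping its four edges) leaves a surface with one boundary circle running through those edges in cyclic order red, blue, red, blue; because $\Sigma$ is planar the component of $\Sigma$ that contained $Q$ is a sphere, so this punctured component is a disc. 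Re-capping instead with the two bigon discs $C_{R}$ (along the two red edges) and $C_{B}$ (along the two blue edges) joined by a branch: an Euler-characteristic count shows that the disc is thereby split into two planar pieces linked by exactly one new branch, so in the component tree one adds a vertex and one incident edge and the tree stays a tree --- hence $\Sigma'$ is planar. The reverse direction is symmetric: toggling an opposite cylinder back to a chequered quadrangle fuses the two planar pieces joined by its branch into a single planar component along a new chequered quadrangle, again by Euler's formula together with the tree structure. The few degenerate gluings (a red edge of $Q$ identified with its other red edge, and so on) are covered by the same count, and the maximally degenerate case, which would create a genus-one subcomplex, cannot occur precisely because $\Sigma$ is planar.

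Finally, the weight. By \eqref{eq:Feynman weight} the rescaled Feynman weight of a map is a product of one factor per non-root $2$-cell, and the chequered quadrangle (coming from the $-2\tr ABAB$ term of $S_{\text{eff}}$) and the opposite coloured cylinder contribute factors of the same magnitude and opposite sign. Since the toggle changes exactly one $2$-cell from chequered quadrangle to opposite cylinder (or back) while fixing the multiset of all other $2$-cells, the number of vertices and the number of faces, it follows that $W(\Sigma')=-W(\Sigma)$; and because the toggle is canonical it carries automorphisms of $\Sigma$ to automorphisms of $\Sigma'$, so $|\text{Aut}(\Sigma')|=|\text{Aut}(\Sigma)|$ (both equal to $1$ for rooted maps in any case). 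Hence $\Sigma\mapsto\Sigma'$ is a fixed-point-free involution of $\mathcal{UM}^{0}_{ABAB}(v)$ with $W(\Sigma')/|\text{Aut}(\Sigma')|=-\,W(\Sigma)/|\text{Aut}(\Sigma)|$, which is the statement of the lemma --- and which, summed over $\mathcal{UM}^{0}_{ABAB}(v)$ for each $v$, gives $m^{0}_{1,1,1,1}=0$.

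I expect the planarity step to be the main obstacle. The subtlety is that the two equal-coloured edges of a chequered quadrangle are \emph{not} adjacent on its boundary, so the toggle is not a purely local cap-replacement: it can change which graph-components are connected, and one has to show that in a planar unstable map the cut along a chequered quadrangle always separates into exactly the configuration that an opposite cylinder interpolates, and never into one carrying a handle. I would prove this by the same boundary-word and component-tree bookkeeping used in \cite{khalkhali2022spectral} to establish their sufficient condition for planarity, now run in the opposite direction.
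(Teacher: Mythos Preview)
Your proposed \emph{colour-type toggle}---replace the chequered quadrangle $Q$ by an opposite-coloured cylinder $C$ while keeping every half-edge matching fixed---does \emph{not} preserve planarity, and this is where the argument breaks. Let $K$ be the (spherical) graph-component of $\Sigma$ containing $Q$. Removing the face $Q$ leaves $K\setminus Q$, a disc whose boundary circle carries the four arcs $r_1,b_1,r_2,b_2$ in that cyclic order. In your toggle the two bigon discs $C_R,C_B$ from the cylinder decomposition are glued back to this \emph{same} disc: $C_R$ to the pair $r_1,r_2$ and $C_B$ to the pair $b_1,b_2$. But $r_1,r_2$ are \emph{opposite} arcs on the boundary circle, so gluing the bigon $C_R$ across them turns the disc into an annulus; gluing $C_B$ to the two remaining arcs (now the two boundary circles of that annulus) then closes it up to a torus. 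Equivalently, the $1$-skeleton of $K$ is untouched by a pure face-swap, so $C_R$ and $C_B$ land in the \emph{same} graph-component, the branch of $C$ is a self-loop in the component tree, and the Euler count gives $(V-3)-E+(F+1)=0$. Your claim that ``the disc is thereby split into two planar pieces'' is therefore false in the generic case, not only in a degenerate one; the hoped-for separation never happens because removing a single face cannot disconnect a connected $1$-skeleton.

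The paper avoids this by \emph{not} keeping the half-edge matchings fixed. Its surgery attaches one boundary bigon of the new cylinder to two of $Q$'s four neighbours in the main component, glues the other boundary bigon to itself (producing a separate trivial spherical component at the far end of the branch), and pairs the two remaining neighbour half-edges with each other. That extra re-gluing is what drops an edge and creates the additional face needed for the left component to retain Euler characteristic $2$; it is also why the reverse direction in the paper has to distinguish the case where a bigon of the cylinder is glued to itself. So the missing idea in your proposal is precisely this change of half-edge matching: a pure $2$-cell swap cannot work, and the involution you set up is not a map from $\mathcal{UM}^0_{ABAB}(v)$ to itself.
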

	\begin{proof}
		For $v>2$, consider a rooted unstable colored map $\Sigma_{1}$ containing an chequered colored quadrangle. We can construct a new map $\Sigma_{1}'$ containing a  an opposite colored cylinder as follows:
		\begin{enumerate}
			\item Treat one of the non-rooted chequered colored quadrangle faces in $\Sigma_{1}$ as a boundary. 
			\item Glue an opposite colored cylinder to this boundary as in Figure \ref{fig:Bijection}.
		\end{enumerate}
	
		 We claim this procedure provides us with a planar map $\Sigma_1'$. From our discussion in Section \ref{sec: primer on maps}, in order to show the gluing in Figure \ref{fig:Bijection} is planar, it  suffices to show that the left graph component of $\Sigma_{1}'$ is planar, since the right component is clearly planar and the only branch in this case forms no handles. The resulting graph component will have three fewer vertices, one fewer edge and two more faces than $\Sigma_{1}$. Thus, it has the same Euler characteristic as $\Sigma$, so also the same genus. The resulting map has the same faces except with one opposite colored cylinder instead of one chequered colored quadrangle. Recall from equation \eqref{eq:Feynman weight} that these two 2-cells contribute the same factor up to a sign in the Feynman weight. This completes the first claim.
		 
		\begin{figure}[H]
			\centering
			\includegraphics[width=0.8\textwidth]{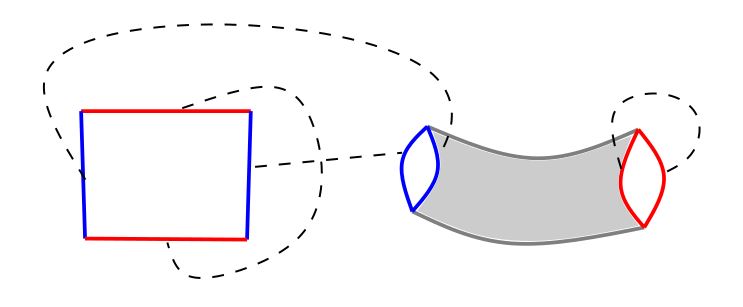}
			\caption{A planar gluing of a non-rooted adjacent colored quadrangle to an opposite colored cylinder. }
			\label{fig:Bijection}
		\end{figure}
	
		Next,  consider a map $\Sigma_{2}$ that contains   an opposite colored cylinder. Since the map is planar the branch in this cylinder must connect two distinct graph components. We then apply the following procedure:
		\begin{enumerate}
			\item If one of the 2-gons is glued to itself, do the reverse procedure of above.
			\item Otherwise:
			\begin{enumerate}
				\item  Treat both 2-gons on each graph component of $\Sigma_{2}$ as a boundary. 
			\item Glue a chequered colored quadrangle to each boundary as in Figure \ref{fig:Bijection_second}.
			\end{enumerate}
		\end{enumerate}
The resulting map $\Sigma_{2}'$ is clearly planar if the first case holds. In the second case,  $\Sigma_{2}'$ will have two more edges, one less vertex, and three more faces than either planar graph component connected by the branch in $\Sigma'$. Thus the resulting map will also be planar. In both cases, the Feynman weight $W(\Sigma_{2}) = - W(\Sigma_{2}')$.
		\begin{figure}[H]
		\centering
		\includegraphics[width=0.8\textwidth]{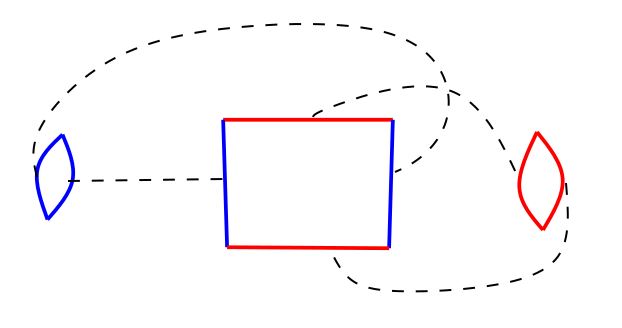}
		\caption{A planar gluing of a non-rooted adjacent colored quadrangle to two different coloured 2-gons.}
		\label{fig:Bijection_second}
	\end{figure}
	
	\end{proof}
	
	\begin{theorem}
		For the formal model with the effective action, $m^{0}_{1,1,1,1}$ is exactly zero.
	\end{theorem}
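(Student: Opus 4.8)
The plan is to show that the coefficient of each power $t^{v}$ in the formal series $m^{0}_{1,1,1,1}=\sum_{v\ge 1}t^{v}\sum_{\Sigma\in\mathcal{UM}^{0}_{ABAB}(v)}W(\Sigma)/|\mathrm{Aut}(\Sigma)|$ vanishes, by constructing a sign-reversing involution $\iota$ on the finite set $\mathcal{UM}^{0}_{ABAB}(v)$. For $v=1,2$ the set is empty and there is nothing to prove, so fix $v>2$. By Lemma~\ref{Lemma 1}, every $\Sigma\in\mathcal{UM}^{0}_{ABAB}(v)$ contains at least one chequered colored quadrangle or one opposite colored cylinder; call these the \emph{special} $2$-cells of $\Sigma$. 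The involution will flip exactly one canonically chosen special $2$-cell using the two surgeries of Lemma~\ref{Lemma 2}, turning a chequered quadrangle into an opposite cylinder and conversely, while negating the Feynman weight.

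Concretely, I would first fix a deterministic traversal of a rooted map — for instance, list the non-root $2$-cells in the order in which one of their edges is first encountered while walking along the faces starting from the rooted edge of the root $ABAB$-quadrangle — and declare the \emph{active} $2$-cell of $\Sigma$ to be the first special $2$-cell met in this traversal (it exists by Lemma~\ref{Lemma 1}). Define $\iota(\Sigma)$ to be the map obtained by applying to the active $2$-cell the appropriate move of Lemma~\ref{Lemma 2}: the "treat a non-rooted chequered quadrangle as a boundary and glue on an opposite colored cylinder" move of Figure~\ref{fig:Bijection} when the active $2$-cell is a chequered quadrangle, and its inverse (collapsing the cylinder back to a chequered quadrangle, or the two-component move of Figure~\ref{fig:Bijection_second}, according to whether a $2$-gon of the cylinder is glued to itself) when it is an opposite cylinder. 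Lemma~\ref{Lemma 2} guarantees that $\iota(\Sigma)$ is again a genus-zero $2$-colored unstable map rooted at the same $ABAB$-quadrangle and built from the same number $v$ of listed $2$-cells, so $\iota(\Sigma)\in\mathcal{UM}^{0}_{ABAB}(v)$, and that $W(\iota(\Sigma))=-W(\Sigma)$ (the chequered quadrangle and the opposite cylinder carry the same coupling-constant factor up to sign). Since the surgery is purely local and the replaced $2$-cell occupies the same position in the traversal before and after, the active $2$-cell of $\iota(\Sigma)$ is precisely the image of the active $2$-cell of $\Sigma$, so $\iota^{2}=\mathrm{id}$; and because $\iota$ alters $\Sigma$ only in a bounded neighbourhood of that one $2$-cell, it induces a bijection on automorphisms, whence $|\mathrm{Aut}(\iota(\Sigma))|=|\mathrm{Aut}(\Sigma)|$ (and in fact, for rooted maps this factor is harmless to track). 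As $\iota(\Sigma)\ne\Sigma$ always, pairing $\Sigma$ with $\iota(\Sigma)$ makes $\sum_{\Sigma}W(\Sigma)/|\mathrm{Aut}(\Sigma)|=0$ for every $v$, and therefore $m^{0}_{1,1,1,1}=0$.

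The main obstacle is verifying that $\iota$ is a genuine involution, i.e. that the local surgery of Lemma~\ref{Lemma 2} is compatible with the canonical marking: one must check that the traversal order of the $2$-cells met before the active one is unchanged by the move and that the move creates no new special $2$-cell "earlier" in the traversal than the flipped one, so that the two operations of Lemma~\ref{Lemma 2} are mutually inverse relative to the root. This forces careful bookkeeping of the face/edge/vertex counts recorded in the proof of Lemma~\ref{Lemma 2}, in particular the degenerate case in which an opposite cylinder has both of its $2$-gons attached within a single graph component versus the generic two-component case; handling these uniformly is where the real work lies. A secondary, milder point is confirming that the genus-expansion normalization (the $1/|\mathrm{Aut}|$ weighting) is respected, which follows once one observes that $\iota$ is identity away from a single $2$-cell and that rootedness rigidifies the relevant symmetry factors.
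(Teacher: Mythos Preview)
Your approach is essentially the same as the paper's: invoke Lemma~\ref{Lemma 1} so that every map in $\mathcal{UM}^{0}_{ABAB}(v)$ has a special $2$-cell, use Lemma~\ref{Lemma 2} to pair it with a map of opposite Feynman weight, and observe that rootedness trivializes the automorphism factor. Your write-up is in fact more careful than the paper's own proof, which simply asserts that positive and negative contributions cancel without explicitly building the sign-reversing involution or confronting the canonical-choice issue you correctly flag.
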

	\begin{proof}
		In the effective action, the redundant parameter $t$ that counts the number of vertices is set to one. For a discussion on why this parameter is redundant see Chapter 1.2.3 of \cite{eynard2016counting}. We also know that the set $\mathcal{UM}^{0}_{ABAB}(v)$ is empty for $v=1,2$.  It is clear then that our moment is of the form
		$$m^{0}_{1,1,1,1} = \sum_{v=3}^{\infty}\,\, \sum_{\Sigma \in \mathcal{UM}^{0}_{ABAB}(v)} \frac{\Gamma (\Sigma)}{|\text{Aut}(\Sigma)|}. $$
		We known from Lemma \ref{Lemma 1} that each map must contain at least one chequered colored quadrangle or an opposite colored cylinder. We also know from Lemma \ref{Lemma 2} that for each map with a chequered colored quadrangle there is a map with the same gluing configuration, except that the chequered quadrangle is replaced with an opposite colored cylinder and vice versa.  Additionally, note that any map with one root has a trivial automorphism group. The result is that when we collect terms of the same power, the number of terms with a positive sign will always equal the number of terms with a negative sign.
	\end{proof}
	
	With these results we may now succinctly prove the main result of this work.
	\begin{proof}[Proof of Theorem \ref{thm:main result}]
		Algebraically solving the loop equations in Appendix \ref{Apdx:NSDE} in terms of $t_{2}$, $t_{4}$, and $m^{0}_{2}$ gives the formula 
		$$m_{1,1,1,1}^{0}= \frac{2 m_{2}^2}{t_{4}} +  \frac{m_{2}\sqrt{t_{2}}}{8t_{4}^2} - \frac{1}{64 t_{4}}$$
		Using the fact that $m^{0}_{1,1,1,1}=0$, we can rearrange for $m_{2}$ in terms of $t_{2}$ and $t_{4}$. There are two roots, but we must choose the one that is not always negative for $t_{4}>0$, in order for the second moment to be positive. 
		
		Combining this with Theorem \ref{thm: convergent} gives the main result. 
		
	\end{proof}
	 Based on computations done in \cite{hessam2022bootstrapping}, we conjecture that from the second moment all other moments can be computed recursively. The proof of this conjecture at the moment seems to be a challenging combinatorial problem.

	\section{The free energy}
	\subsection{Derivation} For this particular model we can use our knowledge of the second moment to compute the leading order term of the logarithm of the partition function in the large $N$ limit, commonly referred to as the free energy \cite{chekhov2006free,chekhov2006hermitian}.

	We know from \cite{khalkhali2022spectral} that the free energy of our models has the genus expansion 
	\begin{equation}
		\ln Z= \sum_{g \geq 0} \left( \frac{N}{t}\right)^{2-2g}F_{g},
	\end{equation}
	where 
	$$F_{g} = \sum_{v=1}^{\infty} t^{v}\sum_{\Sigma \in \mathcal{UM}^{g}(v)} \frac{W(\Sigma)}{|\text{Aut}(\Sigma)|}. $$
	The set $\mathcal{UM}^{g}(v)$ is the set of all maps  of genus $g$ with $v$ vertices glued from the list of 2-cells in Figure \ref{fig:2-cells for model}. The proof of Theorem \ref{thm:main result 2} follows from a simple  computation from the formula for the fourth moment which can be found in Appendix \ref{Apdx:moments}.
	\begin{proof} [Proof of Theorem \ref{thm:main result 2}]
		It is clear that 
		\begin{equation}
			-\lim_{N \rightarrow \infty}\frac{\partial }{\partial t_{4}}\frac{1}{N^{2}}\ln Z = d_{4}.
		\end{equation}
		Since this is a formal series in $N^{-2}$, we may swap the order of the limit and differentiation,
		\begin{align*}
			-  d_{4} =\lim_{N \rightarrow \infty}\frac{\partial }{\partial t_{4}} \frac{1}{N^{2}}\ln Z &= \frac{\partial }{\partial t_{4}} \lim_{N \rightarrow \infty}\frac{1}{N^{2}}\ln Z. \\
		\end{align*}
		Integrating both sides and using the formula for the Gaussian Dirac ensemble in the large $N$ limit from Appendix \ref{App:Gaussian}, we arrive at
		\begin{align}
			F_{0} = \lim_{N \rightarrow \infty}\frac{1}{N^{2}}\ln Z &= \lim_{N \rightarrow \infty}\frac{1}{N^{2}}\ln Z|_{t_{4}=0} - \left[\int (d_{4}(s)) ds\right]_{s=t_{4}} \\
			&= - 5 \ln2 + 2 \ln \pi - 2 \ln t_{2} -\left[\int\frac{t_{2}^{2} - t_{2}\sqrt{8 s + t_{2}^2} + 4 s}{8 s^2} ds\right]_{s=t_{4}} \\
			&= -\frac{1}{2} + \frac{t_{2}}{t_{2} + \sqrt{t_{2}^{2} +8 t_{4}}} + \ln \left[ \frac{\pi^2}{ 64 t_{2}^2} \left(t_{2} + \sqrt{t^{2}_{2} + 8 t_{4}} \right)\right]
		\end{align}
	\end{proof}
	
			\subsection{Random maps and criticality}
			The free energy can be used to find critical behavior of the model, from which an asymptotic expansion can be computed. Such expansions have been shown to bridge connections to theories of 2D quantum gravity \cite{di19952d}. In \cite{hessam2023double}, several Dirac ensembles were shown to have the same critical exponents and asymptotic partition functions as various minimal models.  We would also like to emphasize that this critical behavior does not correspond to a spectral phase transition which is of interest for Dirac ensembles \cite{barrett2016monte,khalkhali2020phase}, but rather the type of critical behavior mentioned that connects matrix models to random commutative geometries. In some sense this can be thought of as a continuum limit.
			
	For a formal matrix integral with a genus expansion, its weighted map generating functions have an interpretation as a discrete probability distribution. 	For simplicity set $t_{2}=1$. Consider, for example, planar maps. If a non-trivial configuration of coupling constants are such that $F_{0}$ is finite and greater than zero, then we say such a configuration is admissible. For admissible configurations we are then able to define the discrete probability distribution over $\mathcal{UM}^{0}(v)$,
	$$\frac{1}{F_{0}}\frac{\Gamma (\Sigma)}{|\text{Aut}(\Sigma)|}.$$
	 We know that there exist admissible configurations from Theorem \ref{thm:main result 2}. 
	 
	 Usually, in matrix models, each coupling constant corresponds to a different trace term in the potential. In these cases we can compute the expectation number of 2-cells of a certain topology by differentiating the free energy. This is not the case in our potential, so a map theoretic interpretation of the second derivative is more complicated. However, it is still a quantity of interest and can roughly be thought of as a weighted expected  number of 2-cells from Figure \ref{fig:2-cells for model}, 

 \begin{equation}
	 \frac{\partial}{\partial t_{4}} 	\lim_{N \rightarrow \infty}\frac{1}{N^{2}}\ln Z = - \lim_{N \rightarrow \infty}\frac{1}{N^{2}}\langle \tr D^4 \rangle = -\frac{t_{2}^{2} - t_{2}\sqrt{8 t_{4} + t_{2}^2} + 4 t_{4}}{8 t_{4}^2}.
\end{equation}
	
With this interpretation we can see that the expected number of 2-cells diverges along the critical curve 
	\begin{equation}
		t_{4} = -\frac{1}{8} t_{2}^{2}.
	\end{equation}
 Note that the solutions of this equation are only when $t_{4}$ is less than zero. Thus, this critical behavior is only seen in the formal model and not the convergent solution.
 
 There exist formal notions of convergence of these probability distributions on maps to random metric spaces. The critical exponent of interest here is the first non-zero power, which is usually of the form $1-\gamma$. This is known as the string susceptibility exponent,  and often indicates to what random metric spaces the above one will converge to in the Gromov-Hausdorff topology. For more details we refer the reader to \cite{budd2022lessons}. 
  
 For simplicity set $t_{2}=1$. We may asymptotically expand the partition function around the critical point  $t_c=-1/8$, 
 $$1-4 \sqrt{2}
 (t_4-t_{c})^{\frac{1}{2}}+ +24 (t_4- t_c)+64 \sqrt{2} \left(t_4-t_{c}\right)^{\frac{3}{2}} + \mathcal{O}((t_4-t_{c})^{2}).$$
 This allows us to deduce that $\gamma=1/2$, which is associated with the limiting metric space known as the continuum random tree. This is also an exponent that does not appear often in random matrix models, but is common in tensor models \cite{lionni2018colored}. This may suggest that the maps enumerated here have a realization as the triangulations seen in tensor models. For comparison, the quartic type $(1,0)$ Dirac ensemble studied in \cite{hessam2023double} has a string susceptibility exponent of $-1/2$, which is associated with the Brownian map. However, future work is still needed to establish such a convergence.

	\section{Conclusions and Outlook}
	
In this work we computed the second moment and the free energy of the quartic type $(2,0)$, $(1,1)$, and $(0,2)$ Dirac ensembles in the large $N$ limit. This was done by studying properties of unstable colored maps and the associated  Schwinger-Dyson equations (SDE's). Applying the results of \cite{guionnet2005combinatorial}, we were then able to show that the solution for all moments for both the convergent and formal models up to the leading order is unique. Furthermore, we explicitly computed the first twenty moments of these models.

	These results can be compared to past numerical work. For example, the plot of the second moment in Figure 7a and 7b of \cite{glaser2017scaling}  bears a strong resemblance to the large $N$ solution presented here. This seems to indicate that convergence is rather fast since the matrix size in these simulations was rather small, between five and ten.  Additionally, in Figure \ref{fig:bootstraps}, the solution derived here can be seen to perfectly fit within the  bootstrapped bounds computed in \cite{hessam2022bootstrapping}, as expected. 
	
\begin{figure}[H]
	\centering
	\begin{subfigure}[b]{0.3\textwidth}
		\centering
		\includegraphics[width=\textwidth]{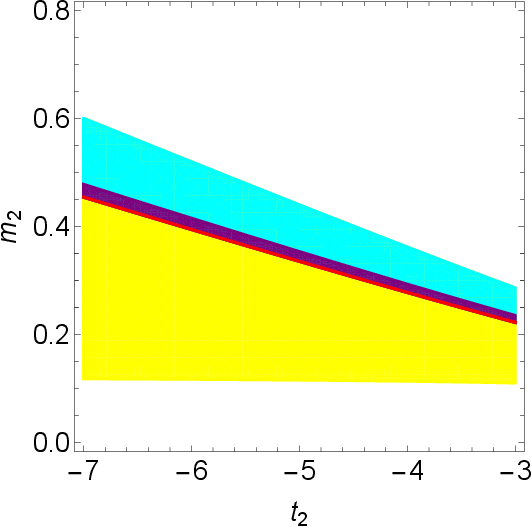}
	\end{subfigure}
	\begin{subfigure}[b]{0.3\textwidth}
		\centering
		\includegraphics[width=\textwidth]{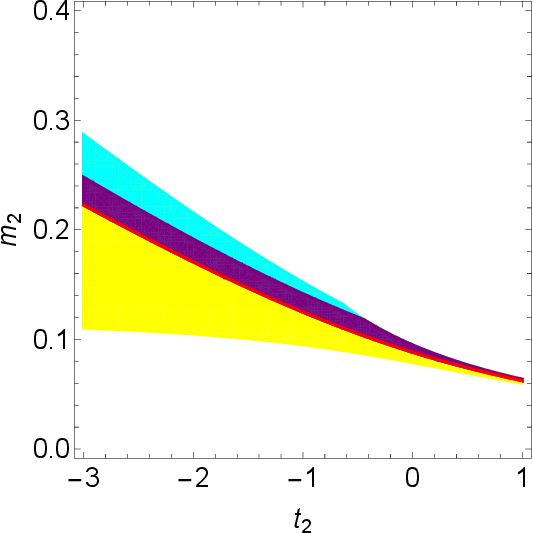}
	\end{subfigure}
\caption{A reproduction of Figures 3 and 4 from ~\protect\cite{hessam2022bootstrapping} with the second moment solution from Theorem ~\protect\ref{thm:main result} overlaid in red. In ~\protect\cite{hessam2022bootstrapping}, the models are the same except $t_{4}$ is set to one. Each color corresponds to a different region of possible solutions to the SDE's generated by considering various positivity constraints on the moments.} 
\label{fig:bootstraps}
\end{figure}

	One limitation of what we currently understand is that despite being able to compute many moments, the generating functions of both moments and Dirac moments is an enigma. It is not clear which choice of generating functions the SDE's can be written  succinctly in terms of. The types of generating functions used in studying past multi-matrix models such as the two-matrix model \cite{eynard2004genus} do not seem to be enough to close these equations. A natural candidate might be a generating function in terms of Dirac moments, but such a formulation is not known to the authors at this time. 
	
	We hope that this work will lead to a general formula for all Dirac moments, or equivalently its generating functions; as well as new techniques to study previously unsolvable multi-matrix models. In particular, ideas presented here may be useful in studying Dirac ensembles on gauge-matrix spectral triples \cite{perez2022multimatrix}. It would be also interesting to see if the recent work in \cite{parraud2023free} can be generalized to multi-trace matrix models, potentially leading to solutions to all orders of even more challenging Dirac ensembles.
		\section*{Acknowledgements}
	We would like to thank Hamed Hessam for the long discussions and hard work that preceded these results. We are also grateful for funding from the Natural Sciences and Engineering Research Council of Canada (NSERC).

	\appendix

	\section{Gaussian Dirac ensembles}\label{App:Gaussian}
	Consider the partition function for the Gaussian  type $(2,0), (1,1)$  and $(0,2)$ Dirac ensemble's partition function
	\begin{equation}\label{eq:partition function}
		Z= \int_{\mathcal{D}^{(p,q)}}e^{-t_{2}\tr D^2}dD;
	\end{equation}
	where
	$$\tr D^{2} = 4 t_2 \left[N\operatorname{Tr}A^2+\tr B^2+\epsilon_1\left(\operatorname{Tr} A\right)^2+\epsilon_2\left(\operatorname{Tr} B\right)^2\right]$$
	and $\epsilon_{1}$ and $\epsilon_{2}$ are according to Table \ref{table:signs}. We wish to compute it explicitly for any given $N$ and a strictly positive coupling constant $t_{2}$. 
	
	
	In the large $N$ limit the answer will be the same as that of the following integral since they have the same loop equations:
	\begin{align*}
		\lim_{N \rightarrow \infty} \frac{1}{N^{2}}\ln Z &= \lim_{N \rightarrow \infty} \frac{1}{N^{2}}\ln \int_{\mathcal{H}_{N}^{2}}e^{-4t_{2} N\tr (A^2 + B^{2})}dAdB\\
		&=\lim_{N \rightarrow \infty} \frac{1}{N^{2}}\ln \left(\int_{\mathcal{H}_{N}}e^{-4t_{2} N\tr A^2}dA \right)^{2}\\
		&=\lim_{N \rightarrow \infty} \frac{1}{N^{2}}\ln \left(\int_{\mathcal{H}_{N}}e^{-4t_{2} \tr A^2}dA \right)^{2}\\
		&=\lim_{N \rightarrow \infty} \frac{1}{N^{2}}\ln \left(\int_{\mathcal{H}_{N}}e^{-4t_{2} N[ \sum_{i=1}^{N} x_{i,i}^{2} + 2\sum_{i < j}(x_{i,j}^{2} + y_{i,j}^{2})] }\prod_{i} d x_{i,i} \prod_{i <j} d x_{i,j} dy_{i,j} \right)^{2}\\
		&= \lim_{N \rightarrow \infty} \frac{1}{N^{2}} \ln \left (\left(\frac{\pi}{ 4 t_{2} } \right )^{N^{2}} \left(\frac{\pi}{ 8 t_{2} } \right)^{N^{2} -N} \right)\\
		&= - 5 \ln2 + 2 \ln \pi - 2 \ln t_{2}
	\end{align*}

	\section{The limiting moments}\label{Apdx:moments}
	By the parity of the integral and the $A | B$ symmetry the normalized moments simplify as follows in the large $N$ limit.

	We list here list the first twenty unique moments by considering all words of each length up to eight.  
	
	\begin{align*}
		&m_{2} = \frac{1}{32 t_{4}} \left({\sqrt{{t_2}^2+8
				{t_4}}}-{{t_2}}\right)
	\end{align*}
	
	\begin{align*}
		&m_{4} = \frac{1}{256 t_{4}^{2}} \left(-t_{2}  \sqrt{t_{2}^2+8 t_{4}}+t_2^2+4 t_{4}\right)\\
		&m_{2,2}=\frac{1}{512 t_{4}^{2}} \left(-t_{2} \sqrt{t_{2}^2+8t_{4}}+t_{2}^2+4 t_{4}\right)\\
		&m_{1,1,1,1} = 0
	\end{align*}
	
	\begin{align*}
		&m_{6} =-\frac{19 \left(t_{2}^3-2 t_{4}^{2} \sqrt{t_{2}^2+8t_{4}}-t_{2}^2 \sqrt{t_{2}^2+8t_{4}}+6 t_{2}
			t_{4}\right)}{3276832768 t_{4}^{3}}\\
		&m_{4,2} = \frac{-t_{2}^3+2 t_{4}^{2} \sqrt{t_{2}^2+8 t_{4}}+t_{2}^2
			\sqrt{t_{2}^2+8t_{4}}-6 t_{2} t_{4}}{3276832768 t_{4}^{3}}\\
		&m_{2,1,2,1} = -\frac{3 \left(t_{2}^3-2 t_{4}^{2} \sqrt{t_{2}^2+8 t_{4}}-t_{2}^2
			\sqrt{t_{2}^2+8 t_{4}}+6 t_{2} t_{4}\right)}{3276832768 t_{4}^{3}}\\
		&m_{3,1,1,1} = -\frac{7 \left(t_{2}^3-2 t_{4}^{2} \sqrt{t_{2}^2+8t_{4}}-t_{2}^2
			\sqrt{t_{2}^2+8t_{4}}+6 t_{2} t_{4}\right)}{32768 t_{4}^{3}}
	\end{align*}
	
	\begin{align*}
		&m_{8} =\frac{11 t_{2}^4-48 t_{2} t_{4} \sqrt{t_{2}^2+8 t_{4}}+92
			t_{2}^2 t_{4}-11 t_{2}^3  \sqrt{t_{2}^2+8 t_{4}}+104
			t_{4}^2}{524288 t_{4}^4}\\
		&m_{4,1,2,1} = \frac{5 t_{2}^4-24 t_{2} t_{4} \sqrt{{t_{2}^2}+8t_{4}}+44 t_{2}^2
			t_{4}-5 t_{2}^3  \sqrt{{t_{2}^2}+8t_{4}}+56
			t_{4}^2}{524288 t_{4}^4}\\
		&m_{6,2} = \frac{15 t_{2}^4-64 t_{2} t_{4} \sqrt{t_{2}^2 +8t_{4}}+124
			t_{2}^2 t_{4}-15 t_{2}^3  \sqrt{{t_{2}^2}+8 t_{4}}+136
			t_{4}^2}{524288 t_{4}^4}\\
		&m_{2,1,1,2,1,1} =\frac{3 t_{2}^4-16 t_{2} t_{4} \sqrt{{t_{2}^2}+8t_{4}}+28 t_{2}^2
			t_{4}-3 t_{2}^3  \sqrt{{t_{2}^2}+8t_{4}}+40
			t_{4}^2}{524288 t_{4}^4} \\
		&m_{3,1,3,1} = m_{3,3,1,1} = m_{5,1,1,1} = m_{1,1,1,1,1,1,1,1}\\
		&= \frac{3 t_{2}^4-8 t_{2} t_{4} \sqrt{{t_{2}^2}+8t_{4}}+20 t_{2}^2
			t_{4}-3 t_{2}^3  \sqrt{{t_{2}^2}+8 t_{4}}+8
			t_{4}^2}{524288 t_{4}^4}\\
		&m_{2,2,2,2} = \frac{9 t_{2}^4-40 t_{2} t_{4} \sqrt{{t_{2}^2}+8t_{4}}+76 t_{2}^2
			t_{4}-9 t_{2}^3  \sqrt{{t_{2}^2}+8 t_{4}}+88
			t_{4}^2}{524288 t_{4}^4}\\
		&m_{4,4} = \frac{11 t_{2}^4-48 t_{2} t_{4}^{3/2} \sqrt{\frac{t_{2}^2}{t_{4}}+8}+92 t_{2}^2 t_{4}-11 t_{2}^3 \sqrt{t_{4}}
			\sqrt{\frac{t_{2}^2}{t_{4}}+8}+104 t_{4}^2}{524288 t_{4}^4}\\
		&m_{2,2,1,1,1,1} = \frac{t_{2}^4+4 t_{2}^2 t_{4}-t_{2}^3 \sqrt{{t_{2}^2}+8t_{4}}-8 t_{4}^2}{524288 t_{4}^4}\\
		&m_{3.2,1,2} = \frac{5 t_{2}^4-24 t_{2} t_{4} \sqrt{{t_{2}^2}+8t_{4}}+44 t_{2}^2 t_{4}-5 t_{2}^3  \sqrt{{t_{2}^2}+8t_{4}}+56
			t_{4}^2}{524288 t_{4}^4}.
	\end{align*}
	
	Note that there are several interesting relations here that are not understood or expected, such as $m_{4} = 2 m_{2,2}$ and $m_{3,1,3,1}= m_{3,3,1,1}= m_{5,1,1,1}= m_{1,1,1,1,1,1,1,1}$ . There are also very clear patterns in these moments, but no general formula is known for them at the time of writing this paper.

	The first few Dirac moments can be written as:
	\begin{align}
		d_{2} &= \frac{1}{4 t_{4}} \left(\sqrt{t_{2}^2+8 t_{4}}-t_{2}\right)\\
		d_{4} &= \frac{1}{8 t_{4}^2}\left( t_{2}^{2} - t_{2} \sqrt{t_{2}^2 + 8 t_{4}} + 4 t_{4}\right)\\
		d_{6} &= \frac{19}{256 t_{4}^{3}} \left(- t_{2}^{3} + t_{2}^{2} \sqrt{t_{2}^{2} + 8 t_4 } - 6 t_{2} t_{4} + 2 t_{4}\sqrt{t_{2}^{2} + 8 t_4 } \right)
	\end{align}

	\section{Examples of Schwinger-Dyson equations}\label{Apdx:NSDE}
	Recall that we denote limiting moments using the notation
	
	$$ m_{\ell_{1},\ell_{2},...,\ell_{q}}= \lim_{N \rightarrow \infty}\frac{1}{N}\langle \tr A^{\ell_{1}}B^{\ell_{2}} \cdots A^{\ell_{q-1}} B^{q}\rangle. $$
	Note that this notation is well-defined, since the model is symmetric in $A$ and $B$.
	
	The first twenty nine unique SDE's for our model are listed with the corresponding input word. They are ordered based on initial word input length.

	\begin{align*}
			&A: 1=8 t_{2} m_2+t_{4}(16 m_4-16 m_{1,1,1,1}+16 m_{2,2}+16 m_{2,2}+64 m_2 m_2)
	\end{align*}

	\begin{align*}
		&A^3: 2 m_2=8 t_{2} m_4+t_{4}(16 m_6-16 m_{3,1,1,1}+16 m_{4,2}+16 m_{4,2}+64 m_2 m_4)\\
		&A B^2: m_2=8 t_{2} m_{2,2}+t_{4}(16 m_{4,2}-16 m_{3,1,1,1}+16 m_{2,1,2,1}+16 m_{4,2}+64 m_2 m_{2,2})\\
		&B A B: 0=8 t_{2} m_{1,1,1,1}+t_{4}(16 m_{3,1,1,1}-16 m_{2,1,2,1}+16 m_{3,1,1,1}+16 m_{3,1,1,1}+64 m_2 m_{1,1,1,1})\\
		&B^2 A: m_2=8 t_{2} m_{2,2}+t_{4}(16 m_{4,2}-16 m_{3,1,1,1}+16 m_{4,2}+16 m_{2,1,2,1}+64 m_2 m_{2,2})\\
	\end{align*}
	
	\begin{align*}
		&A^5 : m_2^2+2 m_4 =8 t_{2} m_6+t_{4}(16 m_8-16 m_{5,1,1,1}+16 m_{6,2}+16 m_{6,2}+64 m_2 m_6) \\
		&A^3 B^2: m_{2,2}+m_2^2=8 t_{2} m_{4,2}+t_{4}(16 m_{6,2}-16 m_{3,3,1,1}+16 m_{3,2,1,2}+16 m_{4,4}+64 m_2 m_{4,2})\\
		&A B^4: m_4=8 t_{2} m_{4,2}+t_{4}(16 m_{4,4}-16 m_{5,1,1,1}+16 m_{4,1,2,1}+16 m_{6,2}+64 m_2 m_{4,2})\\
		&B A^3 B: 0=8 t_{2} m_{3,1,1,1}+t_{4}(16 m_{3,1,3,1}-16 m_{3,2,1,2}+16 m_{3,3,1,1}+16 m_{3,3,1,1}+64 m_2 m_{3,1,1,1})\\
		&B A B^3: 0=8 t_{2} m_{3,1,1,1}+t_{4}(16 m_{3,3,1,1}-16 m_{4,1,2,1}+16 m_{3,1,3,1}+16 m_{5,1,1,1}+64 m_2 m_{3,1,1,1})\\
		&B^2 A B^2: m_2^2=8 t_{2} m_{2,1,2,1}+t_{4}(16 m_{3,2,1,2}-16 m_{3,1,3,1}+16 m_{4,1,2,1}+16 m_{4,1,2,1}+64 m_2 m_{2,1,2,1})\\
		& B^3 A B: 0=8 t_{2} m_{3,1,1,1}+t_{4}(16 m_{3,1,1,3}-16 m_{4,1,2,1}+16 m_{5,1,1,1}+16 m_{3,1,3,1}+64 m_2 m_{3,1,1,1})\\
		&B^4 A: m_4=8 t_{2} m_{4,2}+t_{4}(16 m_{4,4}-16 m_{5,1,1,1}+16 m_{6,2}+16 m_{4,1,2,1}+64 m_2 m_{4,2})\\
	\end{align*}
	
	\begin{align*}
		&A^7 : 2 m_2 m_4+2 m_6=8 t_{2} m_8+t_{4}(16 m_{10}-16 m_{7,1,1,1}+16 m_{8,2}+16 m_{8,2}+64 m_2 m_8)\\
		& A^5 B^2: m_{4,2}+m_4 m_2+m_2 m_{2,2}=8 t_{2} m_{6,2}+t_{4}(16 m_{8,2}-16 m_{5,3,1,1}+16 m_{5,2,1,2}+16 m_{6,4}+64 m_2 m_{6,2})\\
		& A^3 B^4: m_2 m_4+m_{4,2}=8 t_{2} m_{4,4}+t_{4}(16 m_{6,4}-16 m_{5,1,1,3}+16 m_{4,1,2,3}+16 m_{6,4}+64 m_2 m_{4,4})\\
		& AB^6: m_6=8 t_{2} m_{6,2}+t_{4}(16 m_{6,4}-16 m_{7,1,1,1}+16 m_{6,1,2,1}+16 m_{8,2}+64 m_2 m_{6,2})\\
		& B A^5 B: 0=8 t_{2} m_{5,1,1,1}+t_{4}(16 m_{5,1,3,1}-16 m_{5,2,1,2}+16 m_{5,1,1,3}+16 m_{5,3,1,1}+64 m_2 m_{5,1,1,1})\\
		& B A^3 B^3: 0=8 t_{2} m_{3,3,1,1}+t_{4}(16 m_{3,3,3,1}-16 m_{4,1,2,3}+16 m_{3,3,3,1}+16 m_{5,1,1,3}+64 m_2 m_{3,3,1,1})\\
		&B A B^5: 0=8 t_{2} m_{5,1,1,1}+t_{4}(16 m_{5,3,1,1}-16 m_{6,1,2,1}+16 m_{5,1,3,1}+16 m_{7,1,1,1}+64 m_2 m_{5,1,1,1})\\
		&B^2 A^5 : m_{2,2} m_2+m_{4,2}+m_2 m_4=8 t_{2} m_{6,2}+t_{4}(16 m_{8,2}-16 m_{5,1,1,3}+16 m_{6,4}+16 m_{5,2,1,2}+64 m_2 m_{6,2})\\
		&B^2 A^3 B^2: 2 m_2 m_{2,2}=8 t_{2} m_{3,2,1,2}+t_{4}(16 m_{3,2,3,2}-16 m_{3,3,3,1}+16 m_{4,3,2,1}+16 m_{4,1,2,3}+64 m_2 m_{3,2,1,2})\\
		&B^2AB^4 : m_2 m_4=8 t_{2} m_{4,1,2,1}+t_{4}(16 m_{4,3,2,1}-16 m_{5,1,3,1}+16 m_{4,1,4,1}+16 m_{6,1,2,1}+64 m_2 m_{4,1,2,1})\\
		& B^3 A^3 B: 0=8 t_{2} m_{3,3,1,1}+t_{4}(16 m_{3,3,1,3}-16 m_{4,3,2,1}+16 m_{5,3,1,1}+16 m_{3,3,3,1}+64 m_2 m_{3,3,1,1})\\
		&B^3 A B^3: 0=8 t_{2} m_{3,1,3,1}+t_{4}(16 m_{3,1,3,3}-16 m_{4,1,4,1}+16 m_{5,1,3,1}+16 m_{5,1,3,1}+64 m_2 m_{3,1,3,1})\\
		& B^4 A^3: m_{4,2}+m_4 m_2=8 t_{2} m_{4,4}+t_{4}(16 m_{6,4}-16 m_{5,3,1,1}+16 m_{6,4}+16 m_{4,3,2,1}+64 m_2 m_{4,4})\\
		& B^4 AB^2: m_4 m_2=8 t_{2} m_{4,1,2,1}+t_{4}(16 m_{4,1,2,3}-16 m_{5,1,3,1}+16 m_{6,1,2,1}+16 m_{4,1,4,1}+64 m_2 m_{4,1,2,1})\\
		& B^5 A B: 0=8 t_{2} m_{5,1,1,1}+t_{4}(16 m_{5,1,1,3}-16 m_{6,1,2,1}+16 m_{7,1,1,1}+16 m_{5,1,3,1}+64 m_2 m_{5,1,1,1})\\
		& B^6 A: m_6=8 t_{2} m_{6,2}+t_{4}(16 m_{6,4}-16 m_{7,1,1,1}+16 m_{8,2}+16 m_{6,1,2,1}+64 m_2 m_{6,2})
	\end{align*}

	\section{Redundancy of coupling constants}
	In previous literature \cite{barrett2016monte,glaser2017scaling,barrett2019spectral,khalkhali2020phase,hessam2022bootstrapping}, quartic Dirac Ensembles considered have had slightly different coupling constants,
	\begin{equation}
		Z[g] = \int_{\mathcal{D}^{(p,q)}} \exp{\{-g\tr D^2 - \tr D^4\}}dD.
	\end{equation}
	We briefly address the difference in this section and show how, via a simple change of variables, we can transform the model and moments to 
	\begin{equation}
		Z[t_{2},t_{4}] = \int_{\mathcal{D}^{(2,0)}} \exp{\{-g_{2} \tr D^2 - \tr D^4\}}dD
	\end{equation}
	and its moments, allowing one to compare previous results.

	Apply the transformation $D \rightarrow D/\sqrt[4]{t_{4}}$ to $Z_{N}[t_{2},t_{4}]$ for $t_{4} \not = 0$. If $t_{4} =0$, this is precisely a Gaussian integral which can be computed explicitly for any $N$. The result is 
	\begin{equation}
		Z_{N}[t_{2},t_{4}] = \frac{1}{\sqrt[4]{t_{4}}}\int_{\mathcal{D}^{(p,q)}} \exp{\left\{-\frac{t_{2}}{\sqrt{t_{4}}} \tr D^2 - \tr D^4\right\}}dD.
	\end{equation}
	Thus,
	\begin{equation}\label{partition functions relation}
		Z_{N}[t_{2},t_{4}]=\frac{1}{\sqrt[4]{t_{4}}}Z_{N}[t_{2}/\sqrt{t_{4}}].
	\end{equation}
Define 
	\begin{equation}
		d_{\ell}[t_{2}/\sqrt{t_{4}}] =\lim_{N\rightarrow \infty}\frac{1}{N^{2}}\frac{1}{Z_{N}[t_{2}/\sqrt{t_{4}}]}  \int_{\mathcal{D}^{(p,q)}} \tr D^{\ell} \exp{\{-t_{2}/\sqrt{t_{4}} \tr D^2 + \tr D^4\}}dD.
	\end{equation}
	and 
	\begin{equation}
	d_{\ell}[t_{2}, t_{4}] =\lim_{N\rightarrow \infty}\frac{1}{N^{2}}\frac{1}{Z_{N}[t_{2}, t_{4}]}  \int_{\mathcal{D}^{(p,q)}} \tr D^{\ell} \exp{\{-t_{2} \tr D^2 + -t_{4}\tr D^4\}}dD.
	\end{equation}
	
Then, applying the same transformation  and relation \eqref{partition functions relation} to the moments, we arrive at
	\begin{align*}
		d_{\ell}[t_{2}, t_{4}] &= \lim_{N\rightarrow \infty}\frac{1}{N^{2}}\frac{1}{Z_{N}[t_{2}, t_{4}]}  \int_{\mathcal{D}^{(p,q)}} \tr D^{\ell} \exp{\{-t_{2} \tr D^2 + -t_{4}\tr D^4}\}dD\\
	&=\lim_{N\rightarrow \infty}\frac{1}{N^{2}}\frac{\sqrt{t_{4}}}{Z_{N}[t_{2}/\sqrt{t_{4}}]}  \int_{\mathcal{D}^{(p,q)}}\frac{1}{t_{4}^{\ell/4}} \tr D^{\ell} \exp{\{-t_{2}/\sqrt{t_{4}} \tr D^2 + -\tr D^4\}}\frac{1}{\sqrt{t_{4}}}dD\\
	& = t_{4}^{-\ell/4} d_{\ell}[t_{2}/\sqrt{t_{4}}].
	\end{align*}

The above relation gives a clear method to compare our results with those in the above mentioned works.

	\bibliographystyle{abbrv}
	\bibliography{references}

\end{document}